\pgfplotsset{compat=newest}
\newlength\figureheight
\newlength\figurewidth
\tikzset{
     causal node/.style={
        draw, fill=white, circle
    },
    causal edge/.style={
        ->,
        thick,
        >=stealth
    },
    marginalized/.style={
        fill=cud-reddish-purple,
        text=white,
        draw=white
    },
    conditioned/.style={
        fill=cud-bluish-green,
        text=white,
        draw=white
    },
    parent/.style={},
    target/.style={},
    penalized/.style={
        draw=cud-vermillion
    }
}
\tikzstyle{startstop}=[rectangle,rounded corners,minimum width=3cm,minimum height=1cm,text centered, draw=black, fill=cud-vermillion!30]
\tikzstyle{io}=[trapezium, trapezium left angle=70,trapezium right angle=110,minimum width=3cm,minimum height=1cm,text centered,draw=black,fill=cud-bluish-green!80]
\tikzstyle{process}=[rectangle,minimum width=3cm,minimum height=1cm,text centered,draw=black,fill=cud-reddish-purple!30]
\tikzstyle{additionalInfo}=[rectangle,rounded corners,anchor=west,text width=3cm,minimum width=3cm,minimum height=1cm,text centered,draw=black,fill=cud-bluish-green!10]
\tikzstyle{decision}=[diamond,minimum width=3cm,minimum height=1cm,text centered,draw=black,fill=orange!70]
\tikzstyle{mdecision}=[diamond,minimum width=0.2cm,minimum height=1cm,text centered,draw=black,fill=orange!70]
\tikzstyle{arrow} = [thick,->,>=stealth]
\algnewcommand{\LineComment}[1]{\State \(\triangleright\) #1}
\newcommand{\figurefolder}{figures}
\newcommand{\maxz}[1]{\operatorname*{max^*\!}_{#1}}
\newcommand{\minz}[1]{\operatorname*{min^*\!}_{#1}}
\begin{document}


\title{Explaining quantum correlations through evolution of causal models}
\date{\today}

\author{Robin Harper}
\thanks{These authors contributed equally to this work.}
\affilEQUS
\affilUSydPhys

\author{Robert J. Chapman}
\thanks{These authors contributed equally to this work.}
\affilUSydPhys
\affilRMIT

\author{Christopher Ferrie}
\affilEQUS
\affilUSydPhys

\author{Christopher Granade}
\affilEQUS
\affilUSydPhys

\author{Richard Kueng}
\affilITP

\author{Daniel Naoumenko}
\affilUSydPhys

\author{Steven T.\ Flammia}
\affilEQUS
\affilUSydPhys

\author{Alberto Peruzzo}
\affilUSydPhys
\affilRMIT

\begin{abstract}
We propose a framework for the systematic and quantitative generalization of Bell's theorem using causal networks. We first consider the multi-objective optimization problem of matching observed data while minimizing the causal effect of nonlocal variables and prove an inequality for the optimal region that both strengthens and generalizes Bell's theorem. 
To solve the optimization problem (rather than simply bound it), we develop a novel genetic algorithm treating as individuals causal networks. By applying our algorithm to a photonic Bell experiment, we demonstrate the trade-off between the quantitative relaxation of one or more local causality assumptions and the ability of data to match quantum correlations.
\end{abstract}

\maketitle



While it seems conceptually obvious that causality lies at the heart of physics, its exact nature has been the subject of constant debate.
The fundamental implications of quantum theory shed new light on this debate.
It is thought these implications may lead to new insights into the foundations of quantum theory, and possibly even quantum theories of gravity \cite{Leifer2007Conditional,Leifer2008Quantum,Fritz2012Beyond,Fitzsimons2013Quantum,Leifer2013Formulating,Leifer2014A,Brukner2014Quantum,Cavalcanti2014On,Wood2015The,Ried2015A}.

These realizations have their roots in the Einstein-Podolski-Rosen thought experiment \cite{Einstein1935Can} and the fundamental theorems of \citet{Bell1964On} and of \citet{Kochen1967The}.  A cornerstone of modern physics, Bell's theorem, rigorously excludes classical concepts of causality. 
Roughly speaking Bell's theorem states that the following concepts are mutually inconsistent: (1) reality; (2) locality; (3) measurement independence; and (4) quantum mechanics.

In philosophical discussions, typically one rejects (1) or (2), which together are often referred to as \emph{local causality}, though the other options have been considered as well.  In studies with an operational bent, however, one often considers relaxations of (2) or (3) which is what we concern ourselves with here. These relaxations have been addressed from different perspectives, but only regarding specific causal influences in isolation \cite{Toner2003Communication,Barrett2011How, Hall2010Complementary, Hall2010Local, Hall2011Relaxed, Koh2012Effects, Banik2013Lack, Thinh2013Bell,Putz2014Arbitrarily,Maxwell2014Bell}, whereas here we wish to study all possible relaxations of the causal assumptions implied by (2) and (3) simultaneously.

The framework of \emph{causal networks} \cite{Pearl2009Causality,Sprites2001Causation} is wildly successful within the field of machine learning and has led some physicists to utilize them to elucidate the tension between causality and Bell's theorem.  Recently, Wood and Spekkens have shown that existing principles behind causal discovery algorithms (namely, the absence of fine tuning) still cannot be reconciled with entanglement induced quantum correlations \emph{even if one admits nonlocal models} \cite{Wood2015The}.  However, such results only hold for the exact distributions, and would not necessarily apply to experimental data due to measurement noise, or a relaxation of the demand of reproducing exactly the quantum correlations.  Clearly, the further away from the quantum correlations one is allowed to stray, the more likely a locally causal model can be found.


\begin{figure}[b!]
    \begin{centering}
        \includegraphics[width=1\columnwidth]{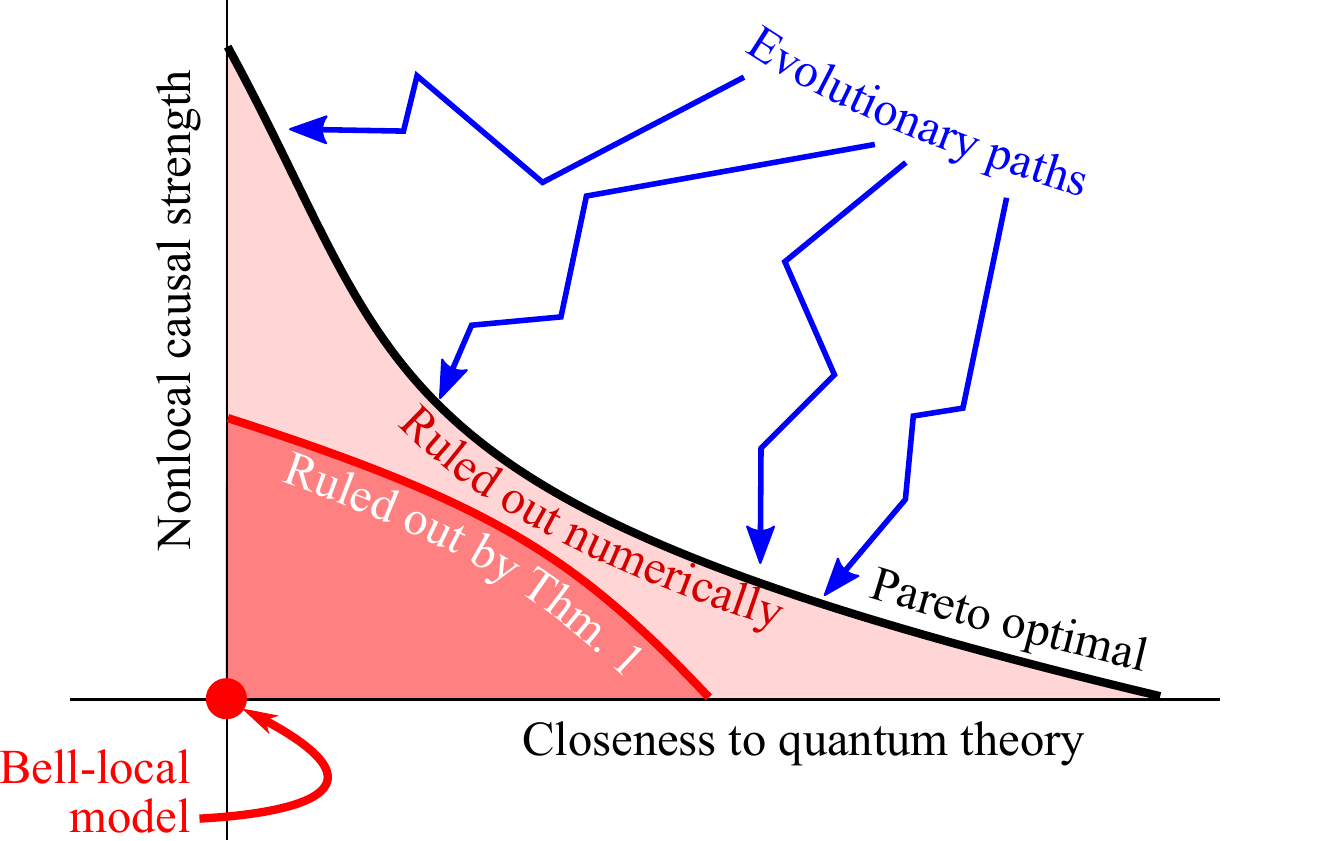}
     \caption{ \label{fig:cartoon}
        A sketch of the concept of Pareto optimality for demarcating the boundary between local causality and quantum correlations. In this picture, Bell's theorem rules out the origin only. Our results rule out an entire region of possible models in the presence of relaxations of Bell's assumptions. We rule out this region both rigorously with \autoref{thm:bounds} and numerically with the evolutionary algorithm that we developed specifically for this task.
    } \end{centering}
\end{figure}

Here we propose a framework for systematic and quantitative generalizations of Bell's theorem by using causal networks. The idea, depicted in \autoref{fig:cartoon}, is to consider the multi-objective optimization problem of matching the observed data from an experiment while minimizing the causal effect of nonlocal variables. It is in this sense of matching experimental data that we are explaining the quantum correlations. Our first contribution is a rigorous lower bound for this optimization problem, demonstrating a generalization of Bell's theorem. \autoref{thm:bounds} below establishes that there must exist a tradeoff between the goodness of fit to experimental data and the quantitative amount of causal influence for \emph{any} model.

This theorem rules out a portion of the space allowed by this new framework, but the bounds are not tight. To solve the optimization problem, and hence numerically find the \emph{optimal} bounds, we develop a type of genetic algorithm called a multi-objective evolutionary algorithm (MOEA) to quantify the relaxations necessary to reproduce the data generated by experiments on entangled quantum systems \cite{Hensen2015,PhysRevLett.115.250401,PhysRevLett.115.250402}. Our genetic algorithm treats as individuals causal networks and we develop genetic operators which represent the evolution of these networks. 
By applying our algorithm to a photonic Bell experiment, we show that the tradeoff between the quantitative relaxation of one or more local causality assumptions and the ability to match quantum correlations appears linear.

The outline of the paper is as follows. In \autoref{sec:causal} and \autoref{sec:bell} we set out the background of the causal models we use and the mathematics required to convert a probability distribution into a fitness function. In \autoref{sec:bounds} we provide analytic bounds on causal influence. In \autoref{sec:experiment} we describe the experiment that provided the input to the algorithm. \autoref{sec:num_res} briefly describes the result of applying the genetic algorithm to the experimental data. \autoref{sec:ea} describes the process by which we convert the problem into one that can be explored using evolutionary operators and details the construction of the algorithm. We conclude in \autoref{sec:discussion} with a discussion.

\section{Causal models for Bell Experiments}\label{sec:causal}

The formalism of causal models allows us to quantify the relaxations necessary to avoid the contradiction in Bell's theorem and, more importantly, explore the trade-offs necessary in minimizing the amount by which the assumptions are violated.  Building off the work of Chaves \emph{et al} \cite{Chaves2015A}, we make all this concrete through a quantification of the relaxation of each assumption in the context of causal models.  The task of minimizing the amount of the relaxation is a multi-objective optimization problem.  Bell's theorem is recast as the statement that all objectives cannot be simultaneously minimized.  We explore the trade-offs through the concept of Pareto optimality.

\tikzset{      
       visible node/.style={
               draw=white, 
               fill=cud-reddish-purple, 
               circle,
               minimum size = 0.6cm,
               text=white, scale=1.5
           },
       hidden node/.style={
            draw=white, fill=cud-bluish-green, circle,minimum size = 0.6cm,scale=1.5
       },
        violation edge/.style={
            ->,dotted,
            thick,
            >=stealth
        }
    }
\begin{figure}\scalebox{.55}{
    \begin{centering}
        \subfloat[]{
        \label{fig:bellgraphA}
        \includegraphics{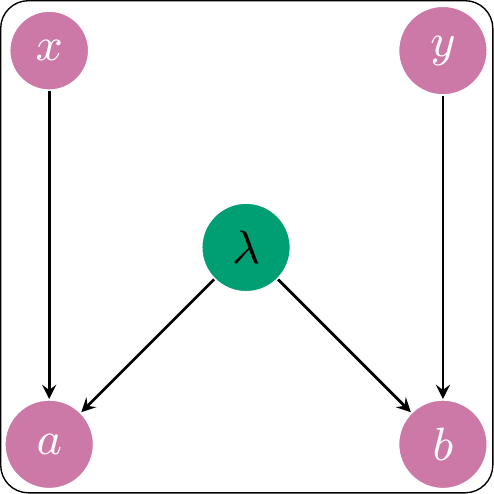}
%

    }   
        \subfloat[
    ]{
    \label{fig:bellgraphB}
       \includegraphics{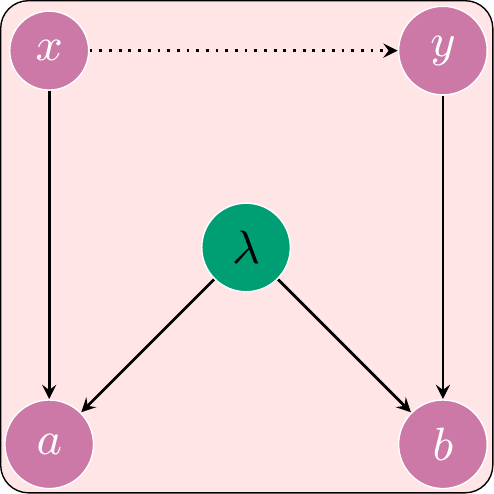}
       
       

        }
     \subfloat[
    ]{
     \label{fig:bellgraphC}
     \includegraphics{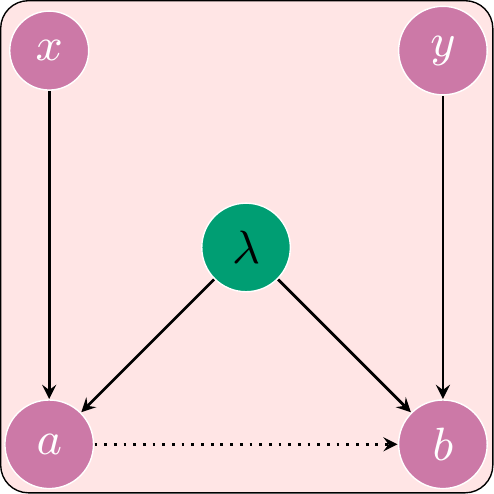}
       

        }

    \end{centering}
    }
    \caption{
        \label{fig:all_bell_dags}
        Causal networks for Bell-type experiments.  On the left is the local hidden variable model, which respects the assumptions going into Bell's famous no-go theorem.  Such a model cannot account for certain correlations obtained from measuring entangled particles.  The graph in the middle contains a causal link between the measurement settings.  Such a model exploits the detection loophole and violates measurement independence. Finally, on the right is a superluminal model which contains a causal link between the outcomes of the experiments.
    }
\end{figure}

The prototypical ``Bell experiment'' has two distant parties, often named Alice and Bob.
We suppose that Alice and Bob each have devices with binary measurement
settings, respectively labeled $x$ and $y$.
Conditioned on these measurement settings, their devices also record binary events, labeled $a$ (Alice) and $b$ (Bob).  Suppose it is empirically observed that $a$ and $b$ are correlated.  Bell defined a locally causal model of such correlations as follows: there exists a ``hidden variable'' $\lambda$ which is the common cause both of $a$ conditioned on $x$, and of $b$ conditioned on $y$. We write these random variables as $a\mathrel{|}x$ and $b\mathrel{|}y$, respectively.
Formally, the general conditional distribution is assumed to satisfy
\begin{equation}\label{eq:local}
\Pr(a,b|x,y,\lambda) = \Pr(a|x,\lambda)\Pr(b|y,\lambda).
\end{equation} 
Moreover, it is assumed that the choices of settings can be made such that each of $x$ and $y$ can be set independently of the hidden variable $\lambda$,
\begin{equation}\label{eq:meas}
\Pr(x,y|\lambda) = \Pr(x|\lambda)\Pr(y|\lambda) = \Pr(x) \Pr(y).
\end{equation} 
Such an assumption is often motivated by the injection of randomness into the measurement settings or the free-will of Alice and Bob.  Bell's theorem can be stated succinctly as follows: the conditional distributions describing the outcomes of some experiments on quantum systems cannot be factorized as in Eqs. \eqref{eq:local} and \eqref{eq:meas}.   

 A causal network is a directed acyclic graph with nodes representing random variables and edges denoting causal relationships between variables.  The defining feature of such networks is the factorization of joint probabilities.  Generally, suppose we have nodes $\{x_0,x_1,\ldots,x_\text{K}\}$, each of which
represents a random variable in our model. We will assume that each such random variable is discrete,
and without loss of generality, will assume integer labels $x_i \in \{0, \dots, \dim x_i - 1\}$ for its possible values.
The edges in the causal network of these variables are defined such that
\begin{equation}\label{eq:def_dag}
    \Pr(x_0,x_1,\ldots,x_\text{K}) = \prod_{i=0}^\text{K} \Pr(x_i|{\rm pa}_i),
\end{equation}
where $\pa_i$ denotes the parents of node $i$.

Take, for example, the causal network in \autoref{fig:bellgraphA}. In general, we can decompose
the joint distribution $\Pr(a, b, x, y, \lambda)$ in terms of conditional distributions as
\begin{equation}
\begin{split}
    &\Pr(a,b,x,y,\lambda) = \\
    &\Pr(a|b,x,y,\lambda)\Pr(b|x,y,\lambda)\Pr(x|y,\lambda)\Pr(y|\lambda)\Pr(\lambda).
 \end{split}
\end{equation}
Using the causal network to eliminate conditionals, \autoref{eq:def_dag} implies
\begin{equation}
\begin{split}
&\Pr(a,b,x,y,\lambda) =  \\
&\qquad\Pr(a|x,\lambda)\Pr(b|y,\lambda)\Pr(x)\Pr(y)\Pr(\lambda),
\end{split}
\end{equation}
which are identical to Bell's assumptions on local hidden variable models.  Thus, Bell's theorem is equivalent to the statement that certain quantum correlations cannot be realized by the causal network in \autoref{fig:bellgraphA}.

\section{Relaxing Bell's assumptions}\label{sec:bell}

It is known that quantum mechanical correlation arising in a Bell-type experiment \emph{can} however be explained by adding a new causal link to the local hidden variable network \cite{Barrett2011How, Hall2011Relaxed}.  Two examples are shown in \autoref{fig:all_bell_dags}.  In many practical cases, these causal links are not entirely unphysical from the standpoint of respecting relativity and free-will, for example.  The reason being that experiments do not actually conform to the exact assumptions Bell made---there are noisy detectors, non-random number generation, losses, inability to space-like separate ``Alice'' and ``Bob,'' and so on.  When this is the case, such causal models are said to be exploiting \emph{loopholes}.  

In \autoref{fig:bellgraphB}, a causal model that allows correlations between the measurement settings is shown.  In the same spirit, we could have had either $x$ or $y$ be causally dependent on $\lambda$ or another hidden variable.  Such models are often called \emph{superdeterministic} and are ruled out by the assumption that Alice and Bob are not colluding and have free-will or access to   independent randomness.  If the experiment only approximately satisfies these assumptions---perhaps due to low detection efficiency---one can still model the data with a local hidden variable said to be exploiting the \emph{detection loophole} \cite{Pearle1970Hidden}.  The question of quantifying the amount of independence of the measurement settings necessary has been addressed from multiple perspectives and has practical quantum cryptographic consequences \cite{Barrett2011How, Hall2010Complementary, Hall2010Local, Hall2011Relaxed, Koh2012Effects, Banik2013Lack, Thinh2013Bell,Putz2014Arbitrarily}.

In \autoref{fig:bellgraphC}, a causal model which allows correlations between the measurement outcomes is shown.  
This is, and similar models are, called \emph{nonlocal} and could potentially even allow for superluminal signaling. 
A quintessential example of a nonlocal model which reproduces the predictions of quantum theory is Bohmian mechanics.  Toner and Bacon studied the amount of nonlocality necessary to simulate quantum correlation in the context of classical communication costs \cite{Toner2003Communication, Maxwell2014Bell}, while Wolf has expressed nonlocality in terms of the compressibility of experimental observations \cite{wolf_nonlocality_2015}.

The current studies, mentioned above, quantifying the relaxations of the causal assumption necessary to replicate quantum correlations are rather disjoint.  Recently, Chaves \emph{et al.} placed the question in context of causal networks and found that some measures of these relaxations can be cast as efficiently solvable linear programs \cite{Chaves2015A}.  
We build on this idea and consider a completely abstract framework amenable to any set of random variables using a single measure of the causal influence of one variable on another. 
This allows us to consider \textit{all possible relaxations simultaneously} and thus explore the trade-offs necessary to simulate quantum correlations with hidden variable models.

We will now state our model more technically.  For consistency we formulate the problem in the context of the two-party Bell experiment, but we emphasize that this approach generalizes in an obvious way to any set of random variables.  A model, $M$, is specified by a joint distribution
\begin{equation}
\Pr(a,b,x,y,\lambda|M).
\end{equation}
We label the empirical frequencies $F(a,b,x,y)$ and denote the total variational distance (TVD) of a model to these frequencies by
\begin{equation}
\operatorname{TVD}(M) = \|\Pr(a,b,x,y|M) - F(a,b,x,y) \|_1 \label{eq:TVD},
\end{equation}
where the vector being normed is labeled by $(a,b,x,y)$.
Here the 1-norm of a vector $x$ is simply $\|x\|_1 = \sum_i |x_i|$.

\begin{widetext}
The causal influence is defined for a \emph{general} graph as follows:
\begin{equation}
  \label{eq:causal-influence-general}
  C_{x_i\to x_j}(M) \defeq \maxz{x_i, x'_i, \pa_j} \|\Pr(x_j|x'_i,\pa_j\setminus\pa_j^2,M)-\Pr(x_j|x_i,\pa_j\setminus\pa_j^2,M)\|_1,
\end{equation}
where $\pa_j\setminus\pa_j^2$ is the set of parents of $x_j$ that are not also grandparents of $x_j$,
and where $\maxz{}$ indicates that the maximization over $x_i$, $x'_i$ and $\pa_j$ is restricted
to feasible assignments. That is, the maximization does not consider assignments outside the support
of $M$.
In words, the causal influence is non-zero when changing $x_i$ leads to a change in $x_j$. It is quantified by maximizing over latent variables of the target that are not also latent variables of the control.
\end{widetext}

For example, if we want to minimize the causal influence between two variables $a$ to $b$ in \autoref{fig:bellgraphC} we consider
\begin{equation}
  \label{eq:causal-influence-defn}
  C_{a\to b}(M) \defeq \maxz{a,a', y} \|\Pr(b|a,y,M)-\Pr(b|a',y,M)\|_1.
\end{equation}
We include the conditions $\Pr(a), \Pr(a') \ne 0$ to prevent the causal influence
being maximized by an assignment outside the support of the random variable
$A$; the maximization should be taken over all feasible assignments.

Intuitively, this definition represents how distinguishable the different settings
of $a$ are when viewed through measurements of $b$.
That is, if $a$ does not causally affect $b$, then
it is not possible for a change in $a$ to be detectable through $b$ alone.
We adopt this definition in lieu of the traditional approach of using \emph{interventions},
wherein an external agent imposes a particular value of $a$ while holding all else fixed,
effectively cutting out any causal links incident on $a$ other than one originating from
the experimentalist themselves. 
Though some novel experiments have been performed using intervention to reason about quantum mechanics \cite{White2016}, we cannot intervene on quantum mechanical models in general, such that we must instead maximize over conditions for the experiment, here represented by the maximization over $a$ and $y$.

The task then is to find a model $M$ which minimizes TVD and $C_{\alpha \to \beta}$ for each $\alpha \to \beta$ ruled out by local causality and measurement independence.  If the empirical frequencies contain some causal dependence between two variables, then either the model must also contain such causal dependence or the observed frequencies from the model must be different from the empirical frequencies. Perhaps interestingly, one might be able to ``trade'' unwanted causal influence between one pair of variables for another, while maintaining the same TVD.  Thus, the problem of determining ``how much'' relaxation of Bell's causal assumptions is necessary to match an empirical observed frequency becomes much more interesting and nuanced.  

Suppose two models $M_1$ and $M_2$ both match the data equally well---i.e. $\operatorname{TVD}(M_1)=\operatorname{TVD}(M_2)$--- but $M_2$ has some unwanted casual influence $a\to b$, say, and $M_1$ does not---that is, $0=C_{a\to b}(M_1) < C_{a\to b}(M_2)$.  Clearly, $M_1$ is preferred and we say $M_2$ is \emph{dominated} by $M_1$.  For many objectives, the situation is more complex but can be handled by the concept of Pareto optimality.

Let $\mathcal M$ be the set of all models.  Let each model's fitness be represented by the function $f:\mathcal M \to \mathbb R^n$, where $n$ is number of objectives.  Define the partial order $\prec$ as follows:
\begin{equation}
M \succcurlyeq M' \Leftrightarrow f(M)_k \geq f(M')_k, 
\end{equation}
for all $k\in \{0,n-1\}$.  If $M \succcurlyeq M'$, we say $M'$ dominates $M$ (or is equivalent to $M$, if $M' \succcurlyeq M$ holds as well). The set $\mathcal P\subset \mathcal M$ of \emph{Pareto optimal} models is now defined as follows:
\begin{equation}
\mathcal P = \{M \in \mathcal M : \{M' \in \mathcal M : M \succcurlyeq M', M' \neq M\} = \emptyset \}. 
\end{equation} 
This says that a model is Pareto optimal if the set of other models which dominate it is empty.  In other words, the Pareto optimal is the set of non-dominated models.

\section{Analytical bounds} \label{sec:bounds}

In this section, we provide analytical bounds which relate the amount of causal influences exhibited by any model $M$ to its agreement with the empirical frequencies $F(a,b,x,y)$.
For the sake of simplicity, we restrict ourselves to analyzing the causal influence between the variables $a$ and $b$---see \autoref{fig:bellgraphC}. 
However, we emphasize that analogous statements are valid for causal influences between any two variables.
For the variables $a$ and $b$, the empirical frequencies themselves admit a causal influence
\begin{equation}
    C_{a \to b}(F) = \maxz{a,a',y} \| F(b|a, y) - F(b|a', y) \|_1
    \label{eq:data_causal_effect}
\end{equation}
which is defined in complete analogy to \autoref{eq:causal-influence-defn}. 
To state our theorem, we must define two more quantities. 
Let $\mathcal{M}_\tau = \mathcal{M}_\tau(F)$ be the set of models having $\TVD(M) \le \tau$ with respect to the empirical frequencies $F$, and denote by $f^\star = \min_a F(a)$ the minimum empirical marginal frequency. 

\begin{theorem}\label{thm:bounds}
For all models $M\in \mathcal{M}_\tau$ and $\tau < 2 f^\star$, 
\begin{equation}
|C_{a \to b} (F) - C_{a \to b} (M)| \le \frac{2 \tau (4 f^\star-\tau )}{f^\star (2 f^\star-\tau )}\,.
\label{eq:bound}
\end{equation}
\end{theorem}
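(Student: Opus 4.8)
The plan is to treat the two causal influences as maxima of the same functional evaluated on the empirical and model distributions, and to show that this functional is Lipschitz in total variation. Since $C_{a\to b}(F)$ and $C_{a\to b}(M)$ are both of the form $\maxz{a,a',y}\|\cdot\|_1$, I would first use the elementary estimate $|\maxz{z}G(z)-\maxz{z}H(z)|\le\maxz{z}|G(z)-H(z)|$ to pull the outer absolute value inside the maximization, and then, for each fixed feasible triple $(a,a',y)$, apply the reverse triangle inequality $\big|\,\|u\|_1-\|v\|_1\,\big|\le\|u-v\|_1$ with $u=F(b|a,y)-F(b|a',y)$ and $v=P(b|a,y)-P(b|a',y)$, where I abbreviate $P(\cdot)\defeq\Pr(\cdot\,|\,M)$. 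One further triangle inequality then gives
\begin{equation}
\begin{split}
&|C_{a\to b}(F) - C_{a\to b}(M)| \\
&\quad\le \maxz{a,a',y}\Big(\|F(b|a,y) - P(b|a,y)\|_1 + \|F(b|a',y) - P(b|a',y)\|_1\Big),
\end{split}
\end{equation}
reducing everything to a single estimate on how much a \emph{conditional} distribution can move when the joint distribution moves by at most $\tau$.

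The core ingredient is therefore a conditioning-stability lemma. For a fixed conditioning event $(a,y)$ I would expand $F(b|a,y)-P(b|a,y)=F(a,b,y)/F(a,y)-P(a,b,y)/P(a,y)$ and separate the numerator perturbation from the denominator perturbation, obtaining the standard bound
\begin{equation}
\|F(b|a,y) - P(b|a,y)\|_1 \le \frac{\sum_b|F(a,b,y) - P(a,b,y)| + |F(a,y) - P(a,y)|}{P(a,y)}.
\end{equation}
Here the slice discrepancy $\sum_b|F(a,b,y)-P(a,b,y)|$ and the marginal shift $|F(a,y)-P(a,y)|$ are each controlled by $\TVD(M)\le\tau$, because marginalizing a distribution only decreases its $1$-norm. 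The only remaining quantity to handle is the denominator $P(a,y)$: the model may \emph{shrink} a conditioning probability relative to the data, which is exactly what makes a conditional sensitive. Using $P(a,y)\ge F(a,y)-|F(a,y)-P(a,y)|$ together with the empirical lower bound coming from $f^\star$ and the fact that the admissible marginal shift is at most of order $\tau$, I would bound $P(a,y)$ below by a positive multiple of $2f^\star-\tau$; this is precisely where the hypothesis $\tau<2f^\star$ is needed to keep the denominator positive, and where the factor $2f^\star-\tau$ in the theorem originates.

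Assembling the two slice estimates and inserting this denominator bound reduces the theorem to a finite-dimensional optimization: maximize the resulting expression over all empirical/model pairs consistent with $\TVD(M)\le\tau$ and with minimum marginal equal to $f^\star$. I expect this constrained optimization to be the main obstacle, for two reasons. First, the numerator discrepancies and the denominator shift are \emph{not} independent---they are drawn from the same total-variation budget and are further linked by normalization of the conditioning marginal---so one cannot simply bound each factor by its own worst case without overcounting; the correct worst case saturates the budget while simultaneously pushing one conditioning probability down toward $(2f^\star-\tau)/2$, which is what produces the combined numerator factor $4f^\star-\tau$. Second, one must check that the maximizing assignment is actually feasible in the sense of the restricted maximization $\maxz{}$ (i.e.\ lies in the support of $M$), so that the extremal configuration is genuinely attainable. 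Carrying out this optimization carefully yields the stated bound $\tfrac{2\tau(4f^\star-\tau)}{f^\star(2f^\star-\tau)}$; I would finish by verifying the edge behavior ($\tau\to0$ giving a bound that vanishes linearly, consistent with continuity of $C_{a\to b}$) as a sanity check.
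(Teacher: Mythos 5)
Your overall architecture is sound and runs parallel to the paper's own proof: both arguments reduce $|C_{a\to b}(F)-C_{a\to b}(M)|$, via triangle inequalities, to perturbations of conditional distributions $\|F(b|a,y)-\Pr(b|a,y,M)\|_1$; both control such a perturbation by a ratio whose numerator is a slice discrepancy plus a marginal shift; and both lower-bound the conditioning marginal by $f^\star-\tau/2$ using normalization, which is exactly where $\tau<2f^\star$ enters. The structural difference is that the paper's key lemma puts the harmonic mean $H\bigl(\Pr(a,y|M),F(a,y)\bigr)$ in the denominator of the conditioning-stability estimate, whereas you use $\Pr(a,y|M)$ alone; both versions of that estimate are correct, and your worry about feasibility of the maximizers resolves itself, since under $\tau<2f^\star$ every $\Pr(a,y|M)\ge f^\star-\tau/2>0$, so both maximizations range over the full space.

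The genuine gap is your final step: you never perform the constrained optimization you defer to, and the mechanism you assert for it is not right. Concretely, if one completes your argument by worst-casing each ingredient separately (slice discrepancy $\le\tau$ by contraction, marginal shift $\le\tau/2$ by normalization, denominator $\ge f^\star-\tau/2$), each conditional term is bounded by $\frac{3\tau}{2f^\star-\tau}$ and the sum of the two terms by $\frac{6\tau}{2f^\star-\tau}$, which \emph{exceeds} the claimed bound $\frac{2\tau(4f^\star-\tau)}{f^\star(2f^\star-\tau)}$ whenever $\tau>f^\star$; so the straightforward completion fails on part of the allowed range. The repair is to make precise the shared-budget remark you only gesture at: for a fixed slice $(a,y)$, writing $s=\sum_b|F(a,b,y)-\Pr(a,b,y|M)|$ and $\delta=F(a,y)-\Pr(a,y|M)$, normalization forces the net deficit $\delta$ of the slice to be compensated outside the slice, so that $s+|\delta|\le\TVD(M)\le\tau$ jointly (not merely $s\le\tau$ and $|\delta|\le\tau/2$ separately), and hence $|\delta|\le\tau/2$. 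Each term is then at most $\frac{\tau}{f^\star-\tau/2}=\frac{2\tau}{2f^\star-\tau}$, and the sum at most $\frac{4\tau}{2f^\star-\tau}$, which is $\le\frac{2\tau(4f^\star-\tau)}{f^\star(2f^\star-\tau)}$ precisely because $\tau<2f^\star$. So your route is completable, and in fact yields a slightly stronger bound; but the completed optimization does not produce $4f^\star-\tau$ as a ``combined numerator factor'' as you claim. In the paper that factor comes from the other route entirely: lower-bounding the harmonic mean, $H\bigl(f^\star-\tfrac{\tau}{2},\,f^\star\bigr)=\frac{2f^\star(2f^\star-\tau)}{4f^\star-\tau}$, after bounding the numerator more crudely by $2\,\TVD(M)$ per term. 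As written, your assertion that the optimization reproduces the stated expression is unsubstantiated (and false as an equality); what rescues the proposal is that the correct output of your setup implies the stated inequality a fortiori.
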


We point out that the bound \eqref{eq:bound} becomes loose and eventually diverges if the minimum empirical marginal frequency $f^\star$ approaches zero or if the $\TVD$ of the class of models becomes too large relative to $f^\star$.

The proof of \autoref{thm:bounds} can be found in \hyperref[app:bounds_derivation]{Appendix A}.

\section{Bell experiment and data}\label{sec:experiment}

\begin{figure}[t!]
\centering
\includegraphics[width=1\columnwidth]{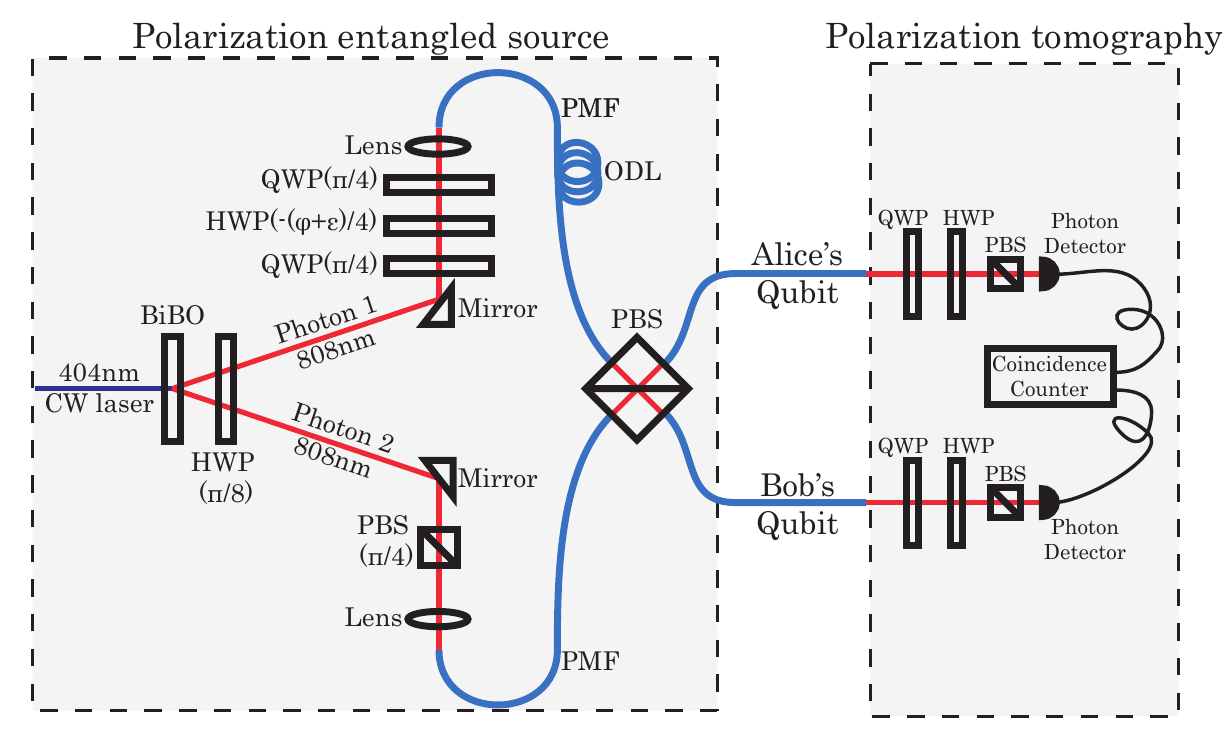}
\caption{Photon paris at 808 nm are emitted via spontaneous parametric down-conversion from a 404 nm pumped bismuth triborate (BiB0) crystal and we prepare polarization enabled photons, labelled ``Alice's Qubit'' and ``Bob's Qubit''. The components used are: half wave plates (HWP), quarter wave plates (QWP), polarizing beam splitters (PBS), polarization maintaining fibers (PMF) and an optical delay line (ODL). Once the state is prepared a polarization tomography setup enables projection of each qubit onto any pure polarization state, which is sufficient to perform two-qubit tomography.}
\label{fig:schematic}
\end{figure}

As input for the MOEA we use data from a polarization photonic Bell experiment, shown in \autoref{fig:schematic} \cite{chapman_experimental_2016}.
Indistinguishable horizontally polarized ($\ket{H}$) photon pairs are generated via type-1 spontaneous parametric down-conversion. 
Both polarization qubits are rotated into a diagonal state  $\tfrac{1}{\sqrt{2}}(\ket{H}+\ket{V})$ by a half wave plate (HWP) with fast axis at $\tfrac{\pi}{8}$ from vertical, where $\ket{V}$ denotes vertical polarization.
A polarization phase rotation is applied to photon 1 by two quarter wave plates (QWPs) and a HWP, while photon 2 has its state optimized by a polarizing beamsplitter (PBS). 
Both photons are collected in polarization maintaining optical fiber (PMF) and are incident on the two input faces of a fibre-coupled PBS, which transmits $\ket{H}$ and reflects $\ket{V}$, preparing Alice's and Bob's qubits.
The configuration of the optical fibres results in a $\sigma_x$ operation applied to Alice's qubit.
By measuring in the coincidence basis, we post-select the state
\begin{equation}
	\rho = \frac{1+\gamma}{2}\ket{\Phi^+}\bra{\Phi^+} + \frac{1-\gamma}{2}\ket{\Phi^-}\bra{\Phi^-},
\label{eq:gamma}
\end{equation}
where $\ket{\Phi^{\pm}}$ are the Bell states $\tfrac{1}{\sqrt{2}}(\ket{H_AV_B}\pm\ket{V_AH_B})$ with subscript $A$ ($B$) corresponding to Alice's (Bob's) qubit. 
The parameter $\gamma$ defines the coherence of the state which depends on the overlap of the two photons after the fibre-coupled PBS and is controlled by the optical delay line (ODL). 
The state prepared when $\gamma=1$ is a maximally entangled Bell state $\ket{\Phi^+}$ and when $\gamma=0$ is an incoherent mixture $\tfrac{1}{2}(\ket{H_AV_B}\bra{H_AV_B}+\ket{V_AH_B}\bra{V_AH_B})$.
The polarization tomography setup in \autoref{fig:schematic} enables projection onto any pure state and can be used for two-qubit state tomography \cite{james_measurement_2001}.
The photons are detected with silicon avalanche photo-diodes and coincidence counts recorded by a timing card.

\begin{figure}[t!]
\centering
\includegraphics[width=1\columnwidth]{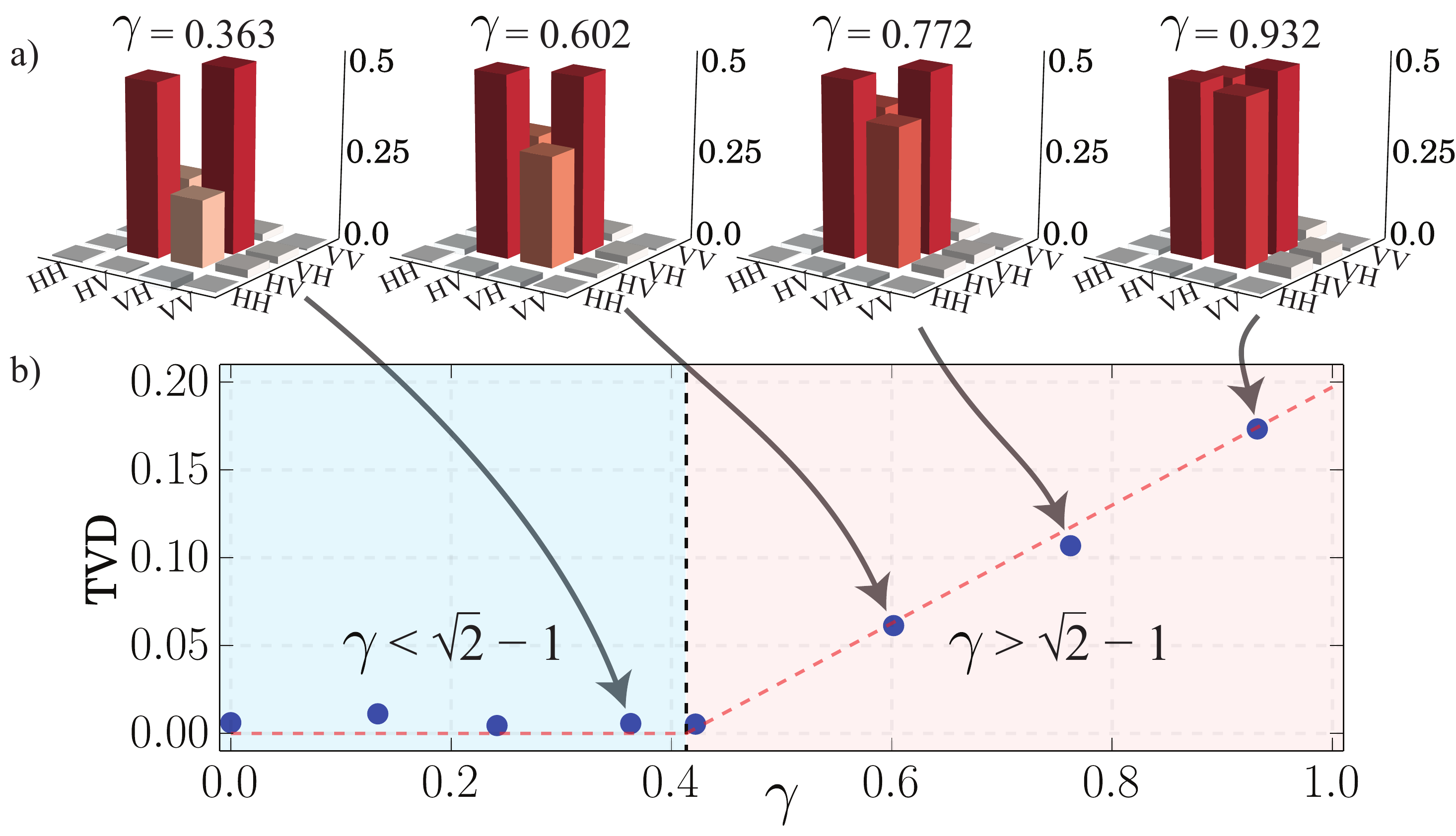}
\caption{(a) The real component of experimentally measured density matrices for a range $\gamma$ values measured via two-qubit state tomography. (b) the results of the MOEA (run as an EA) on a local model (see \autoref{fig:bellgraphA}) to generate the best achievable TVD for various values of $\gamma$ from a Bell-like experiment described in \autoref{sec:experiment}. As theoretically predicted above a certain threshold ($\gamma = \sqrt{2}-1$) the local model can no longer explain the measurement results with zero TVD. This threshold corresponds to violating the CHSH inequality in \autoref{eq:S}. The linear increase in TVD corresponds to the linear increase in $S$ as discussed in \autoref{sec:experiment}.}
\label{fig:gammagraph}
\end{figure}

Our input for the MOEA is a normalized frequency distribution $F(a,b,x,y)$ across binary measurement settings for Alice ($x=\{x_1,x_2\}$) and Bob ($y=\{y_1,y_2\}$), and binary measurement outcomes $a = \{\ket{H_A},\ket{V_A}\}$ and $b = \{\ket{H_B},\ket{V_B}\}$ respectively. 
The measurement settings are controlled by wave plate angles in the tomography and the measurement outcome is the collapse of the state onto one of the four basis state $\ket{H_AH_B}$, $\ket{H_AV_B}$, $\ket{V_AH_B}$ or $\ket{V_AV_B}$.
A single measurement is the number of photon pairs recorded for a fixed integration time and can be written as $N_{ab}^{xy} = \mathcal N \tau \bra{ab}U^{xy}\rho U^{xy\dagger}\ket{ab}$ for measurement settings $x,y$ and measurement outcomes $a,b$. $\mathcal N$ is the total photon flux, $\tau$ is the integration time and $U^{xy}$ is the operation of the wave plates.
We calculate $F(a,b,x,y)$ by measuring all combinations of $x$, $y$, $a$ and $b$, and normalizing by the total number of photon pairs recorded. We note that this experiment is not performed in a loophole-free way, but nonetheless provides us with the quantum correlations we wish to analyse.

Typically, Bell experiments aim to violate the CHSH inequality \cite{clauser_proposed_1969}, confirming that quantum mechanical systems cannot be described with local hidden variable models. 
The CHSH inequality is calculated as
\begin{subequations}
\begin{align}
    |S| & \leq 2, \text{ where } \\
    S & = \mathbb E[x_1y_1]-\mathbb E[x_2y_1]+\mathbb E[x_1y_2]+\mathbb E[x_2y_2],
\end{align}
\label{eq:S}
\end{subequations}
where $\mathbb E[xy]$ defines the correlation between Alice's ($x=\{x_1,x_2\}$) and Bob's ($y=\{y_1,y_2\}$) measurements, given as
\begin{equation}
 \mathbb E[xy] = \frac{N_{H_AH_B}^{xy}-N_{H_AV_B}^{xy}-N_{V_AH_B}^{xy}+N_{V_AV_B}^{xy}}{\mathcal N\tau}.
\end{equation}
While the CHSH inequality holds for systems which respect local causality, a pair of quantum entangled particles can achieve a maximum value of $|S| = 2\sqrt{2}$.
By tuning the $\gamma$ parameter in \autoref{eq:gamma}, then for measurement settings fixed to be optimal for the case $\gamma = 1$, we can prepare states that obey the CHSH inequality when $\gamma \leq \sqrt{2}-1$ and states that violate it.
In order to achieve the maximum violation of the CHSH inequality, it is necessary to chose specific wave plate angles for $x$ and $y$.
Here, we are not interested in violating the CHSH inequality; however, we can use it to benchmark our results from the MOEA.

\section{Edge of reality}\label{sec:num_res}

\begin{figure}[t!]
\centering
\includegraphics[width=1.0\columnwidth]{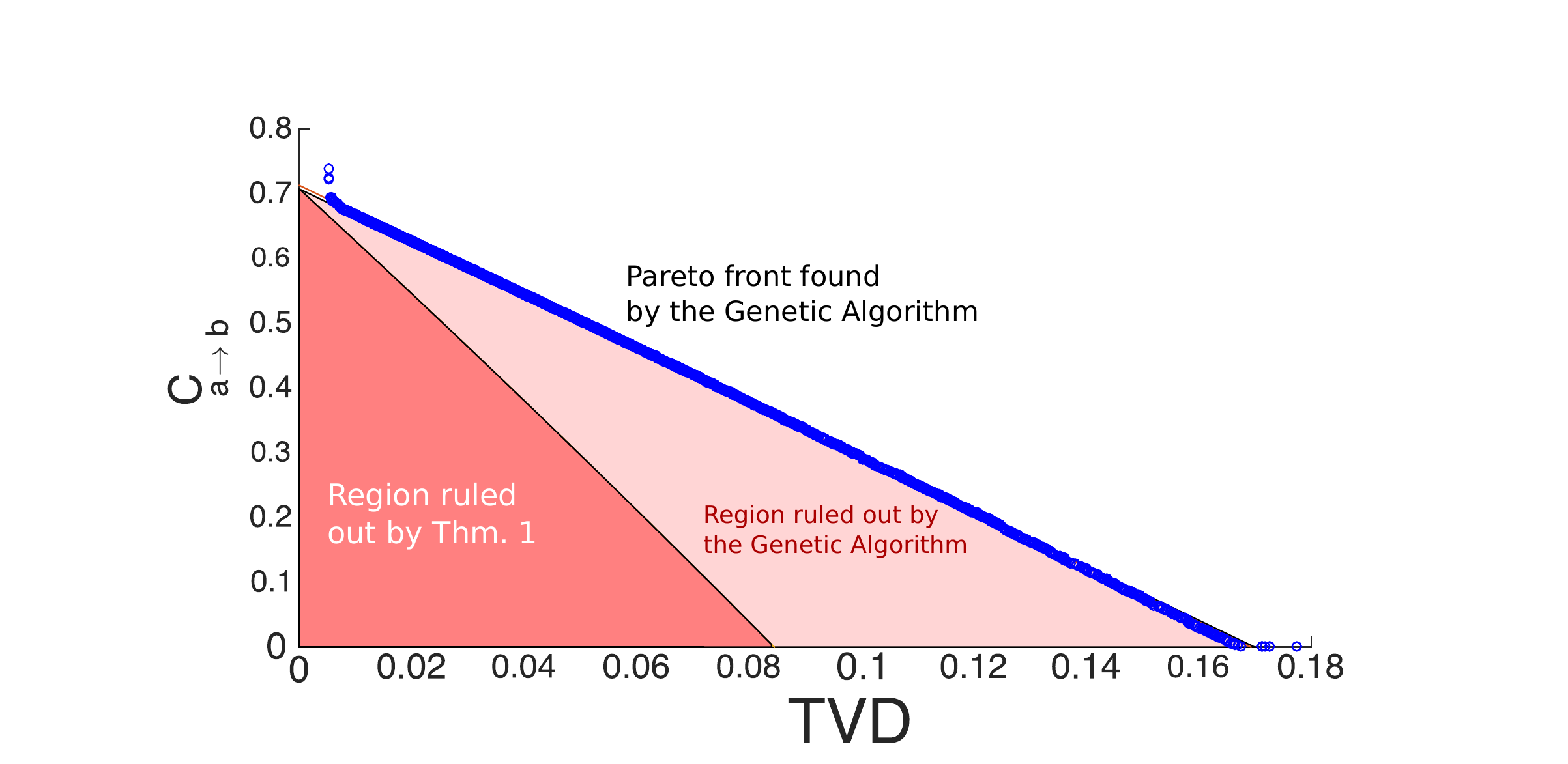}
\caption{The Pareto front for the causal network in \autoref{fig:bellgraphC} using the data from a photonic Bell experiment. The vertical axis labels the causal influence \autoref{eq:causal-influence-defn} while the horizontal axis labels the closeness to experimental data \autoref{eq:TVD}. The blue circles are the values for the non-dominated models found by the evolutionary algorithm. For comparison purposes, the straight line is a linear fit to these data.}
\label{fig:2D}
\end{figure}

\begin{figure}[ht]
\centering
\includegraphics[width=1.0\columnwidth]{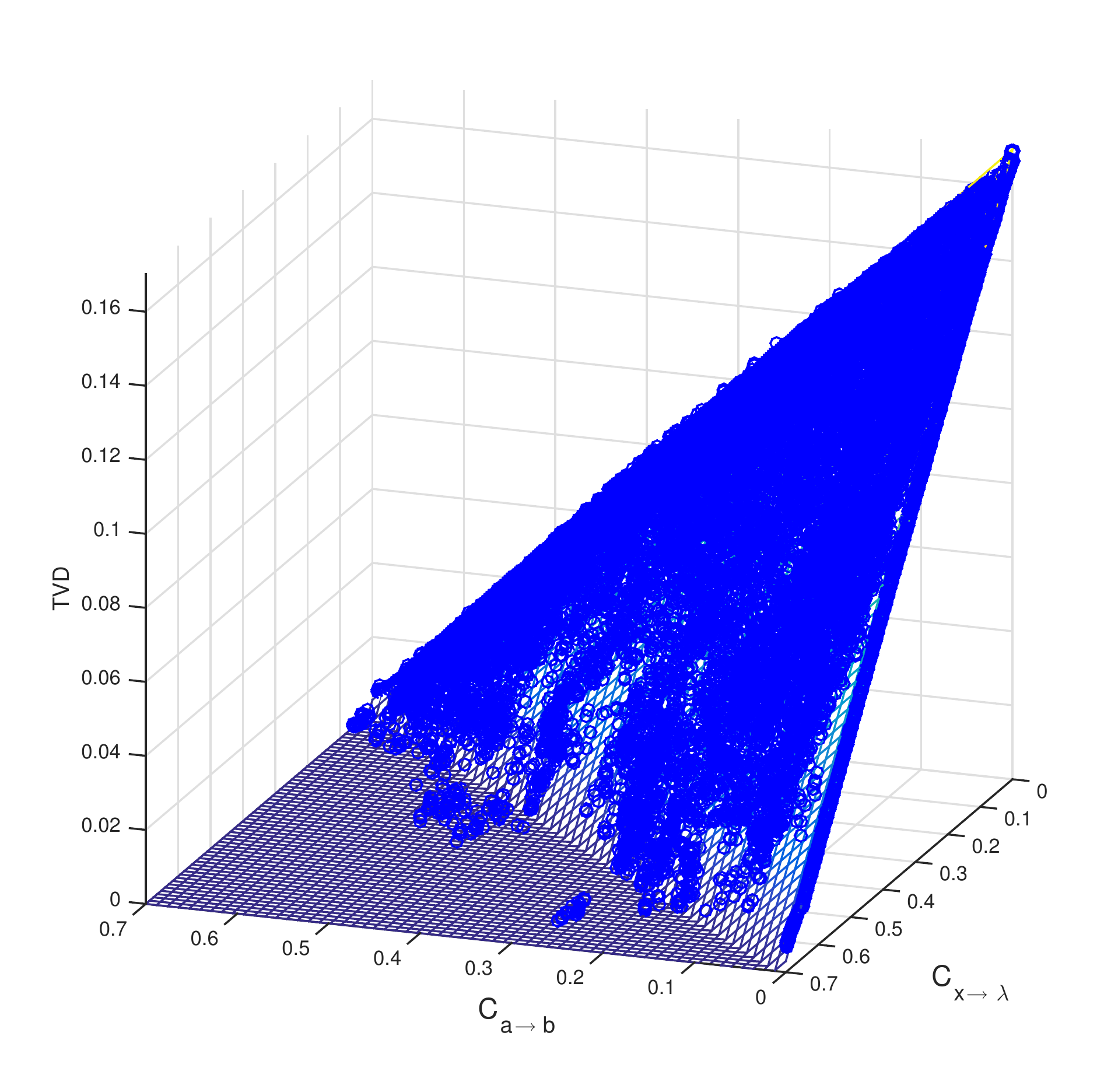}
\caption{The Pareto front for a local causal network with added $a\to b$ and $x\to \lambda$ edges using the data from a photonic Bell experiment. The vertical axis labels the closeness to experimental data \autoref{eq:TVD}. The two horizontal axes label the causal influences \autoref{eq:causal-influence-defn} for the added edges. The blue circles are the values for the non-dominated models found by the evolutionary algorithm. The flat surface is a linear fit to these data.}
\label{fig:3D}
\end{figure}

Using the experimental data (where $\gamma=0.984$), we searched for the Pareto optimal models by developing a multi-objective evolutionary algorithm to find the best underlying probability distributions for a causal network. Since this represents a trade-off between a local realistic model and real-world correlations, we call the Pareto optimal surface the ``edge'' of reality.

An \emph{individual} of the population is a probability distribution over the nodes of a given causal network (each such individual is a causal model, $M$) and its \emph{multi-objective fitness} depends on how close the model can reproduce the experimental data and the amount of causal influences between nonlocally separated variables.     

As an initial step, we examined relaxing one casual edge at a time, beginning with a causal influence from $a$ to $b$---that is, Alice's outcome is allowed to influence Bob's. The Pareto front (the numerical approximation to the Pareto optimal) is shown below in \autoref{fig:2D}. Like the theoretical bounds (which while not linear, are nearly so in the considered domain), the front appears to be linear (Pearson's $\rho^2$ value of 0.997, with bisquare robust fitting). That is, increasing locality violations allows observed (quantum) correlations to be more exactly matched, the trade-off being approximately \textit{linear} in nature.  Next, we relax the causal edges $a\to b$ and $x\to \lambda$ simultaneously. The found Pareto front is shown in \autoref{fig:3D}. Again, we see that the front appears linear ($\rho^2 = 0.9902$). We have also used our algorithm to test other causal networks and found the results to be quantitatively identical to these two cases.

\section{The Evolutionary Algorithm}\label{sec:ea}

In order to find the Pareto front of solutions, it is necessary to find feasible probability distributions that give rise to the required TVD with the required causal-violations(s). There is no known way of doing this analytically. Even in simple single edge causal models the search space is prohibitively large and objective non-convex. This search space grows rapidly with additional causal edges. Evolutionary Algorithms are known methods for finding such Pareto fronts where there is only limited knowledge of the underlying search landscape. We wish to numerically find the Pareto optimal set of models representing Bell experiment data.  To do so, we use evolutionary computation \cite{Holland75a}.

Such algorithms are generally well studied for functions of the form $f:\mathbb R^m\to \mathbb R^n$.  However, here the domain of our objective function $f$ is $\mathcal M$, i.e. the probability distributions on the causal network. 
Consequently, there are implicit constraints on the relative values these distribution can take (for instance, in each node they need to sum to 1) and so we have devised a set of evolutionary operators that allow the probability distribution of an arbitrary causal network to be evolved. With this we combine several evolutionary computation strategies to evolve and explore the Pareto front of a given arbitrary network.

\subsection{Evolutionary Algorithm Overview}
\label{sec:ea-overview}

\begin{figure}
\begin{algorithm}[H]
    \caption{Evolutionary Algorithm}
    \label{alg:mu-plus-lambda}
    \begin{algorithmic}
        \Require Population sizes $\mu,\lambda \in \mathbb{N}$,
            crossover and mutation probabilities $p_{\times},p_\mu$
        \Require an initial population $P_0$, a number of generations
            $N_\gen$.
        \Require A genetic operator \textsc{Evaluate}$(I)$
            that annotates individuals with their fitness $\vec{f}(I)$.
        \Require A genetic operator \textsc{Mutate}$(I)$ that
            mutates an individual in-place.
        \Require A genetic operator \textsc{Crossover}$(I_1, I_2)$
            that crosses over two individuals in-place.
        \Require A genetic operator \textsc{Select}$(P, \mu)$
            that selects $\mu$ individuals from the population
            $P$.
        \Ensure A Pareto front $P_*$ of individuals with respect to the
            fitness functions implemented by \textsc{Evaluate}.

        \LineComment{
            In this Algorithm, we follow the DEAP \cite{DEAP_JMLR2012} convention of storing an individuals' fitness as \emph{metadata}.
            This prevents having to re-evaluate fitnesses for every comparison.
        }

        \State $P \gets P_0$
        \State $P_* \gets \textsc{KDTree}(\{\})$
            \Comment{Initialize the Pareto front to an empty k-d tree \cite{bentley_multidimensional_1975}.}
        \State \textsc{Evaluate}($P$)
            \Comment{Evaluate each individual in the initial population.}
        \For{$i_\gen \gets 1, \dots, N_\gen$}
            \State $P_\offspring \gets \{\}$
            \While{$|P_\offspring| < \lambda$}
                \State Draw two individuals uniformly at random from $P$
                    and copy them as $I_1$ and $I_2$.
                \Switch{$u \sim \Uni(0, 1)$}
                    \Case{$u \in [0, p_\times)$}
                        \State \textsc{Crossover}($I_1$, $I_2$)
                    \Case{$u \in [p_\times, p_\times + p_\mu)$}
                        \State \textsc{Mutate}($I_1$)
                    \Case{$u \in [p_\times + p_\mu, 1]$}
                        \LineComment{Leave $I_1$ and $I_2$ unmodified.}
                \EndSwitch
                \State $P_\offspring \gets P_\offspring \cup \{I_1\}$
            \EndWhile
            \State \textsc{Evaluate}($P_\offspring$)
            \State $P \gets $\textsc{Select}($P \cup P_\offspring$, $\mu$)
                \Comment{Using the NSGA-II crowding operator, order the individuals and select the next generation from this one and the new offspring.}
            \For{$I \in P$}
                \If{there does not exist $I' \in P_*$ such that $I' \succeq I$}
                    \Comment{
                        Average time complexity $\mathrm{O}(\log|P_*|)$  for k-d trees.
                    }
                    \State $P_* \gets P_* \cup \{I\}$
                \EndIf
            \EndFor
            \If{any individuals were added to $P_*$ this generation}
                \State $P_* \gets \left\{
                    I | I \in P_* \text{ such that } \forall I' \in P_*,
                    I' \not\preceq I
                \right\}$
                    \Comment{Remove dominated individuals from the Pareto front.}
                \State Rebalance $P_*$.
            \EndIf
        \EndFor

        \State \Return $P_*$

    \end{algorithmic}
\end{algorithm}
\end{figure}

As the cornerstone of our multiobjective evolutionary algorithm (MOEA) we utilize the well-known and
well-understood NSGA-II algorithm \cite{NSGA}. Although the NSGA-II algorithm specifies both the generation and selection procedures, we utilize the the DEAP software library \cite{DEAP_JMLR2012}
 which provides the NSGA-II algorithm only for the ``select'' stage. 
 The method by which we proceed is to use the $(\mu + \lambda)$ algorithm (detailed in \autoref{alg:mu-plus-lambda}) where we set $\lambda=\mu$ to be the population size. For the purposes of avoiding confusion we note that the $(\mu + \lambda)$  is more properly an algorithm used with a subset of evolutionary algorithms known as evolutionary strategies, and thus is not part of the toolkit of the seperate branch known as genetic algorithms. Consequently, our algorithm is not strictly a genetic algorithm but is an evolutionary algorithm. Although we use  an implementation of $(\mu + \lambda)$, by setting $\lambda=\mu$ the algorithm is functionally equivalent to the generation algorithms used in genetic algorithms. In this paper we make no distinction between genetic algorithms and the more general term evolutionary algorithm in the classification of the algorithms used.
 The overall implementation of the algorithm is thus functionally identical to the original NSGA-II algorithm, save that the selection of parents is random rather than by binary tournament selection.
 
Consequently this evolutionary algorithm proceeds in
\emph{generations}, each of which consists of producing $\lambda$ offspring
from the previous generation's population, then selecting $\mu$ individuals
from the combination of the previous population and the new offspring to form
the new population. As detailed in \autoref{alg:mu-plus-lambda} the $(\mu + \lambda)$ algorithm is expressed
abstractly in terms of \emph{genetic operators} that create, crossover,
evaluate and select individuals within each population. Thus, we form our
algorithm by specifying what an individual is, the fitness functions that we
use in evaluating individuals, and by providing suitable genetic operators to create ``children''
causal networks.

\subsection{Representation of Individuals}
\label{sec:ga-individuals}

Effectively, our genetic algorithm searches for Pareto optimal models $M \in
\mathcal M$ by representing $M$ as an assignment of conditional distributions
to each node in a causal network with a fixed structure. Since the random variables at each
node are constrained to be discrete, we represent the conditional
distributions by \emph{tensors}, such that finding arbitrary joint, marginal
and conditional distributions over subsets of the nodes is then an exercise in
standard tensor contractions.

In particular, consider a node $x_i$ with $n$ causal parents $\pa_i = \{x_{i_1}, x_{i_2}, \dots,
x_{i_n}\}$.  Then, the distribution $\Pr(x_i | \pa_i) =
\Pr(x_i | x_{i_1}, \dots, x_{i_n})$ is given by the tensor
\begin{equation}
    X[j_0, j_1, \dots, j_n] \defeq \Pr(x_i = j_0 | x_{i_1} = j_1, \dots, j_n),
\end{equation}
where we have used square brackets to indicate indices 
(similar to C- or Python-style notation).

We can contract repeated indices
of two such tensors with the tensor at a corresponding third node
to perform expectation values. For example, let $A$ be the tensor for
$\Pr(a | x, \lambda)$, $B$ be the tensor for $\Pr(b|y, \lambda)$ and
$\Lambda$ be the tensor for $\Pr(\lambda)$ in the model of \autoref{fig:bellgraphC}.
Then, to find $\Pr(a, b | x, y)$, we compute
\begin{equation}
    \Pr(a, b | x, y) = \sum_{\lambda } A[a, x, \lambda] B[b, y, \lambda] \Lambda[\lambda].
\end{equation}
The general case, allowing for arbitrary numbers of random variables
and conditions, is given as \autoref{alg:ind-joints}.

\begin{figure}
\begin{algorithm}[H]
    \caption{Joint and Conditional Distribution Tensors from Individuals}
    \label{alg:ind-joints}
    \begin{algorithmic}

        \Require Individual $I$, random variables $x_1, \dots, x_n$, random variables $y_1, \dots, y_m$.
        \Ensure Tensor $J[i_1, \dots, i_n, j_i, \dots, j_m] = \Pr(x_1 = i_1, \dots, x_n = i_n | y_1 = j_1, \dots, y_m = j_m)$ for the distribution
        represented by $I$.

        \State $X' \gets \{x_1, \dots, x_n\}$
            \Comment{$X'$ holds those rvs we must still include.}
        \State $F \gets \{\}$
            \Comment{
              $F$ holds those tensor factors we include
              in the final contraction.
            }
        \While{$X'$ is not empty}
            \State $F \gets F \cup X'$
            \State $X' \gets \bigcup_{x\in X'} \pa_{x} \setminus F$
                \Comment{Add in any parents that we have not already added.}
        \EndWhile

        \State $J \gets$ Einstein sum over of $F$, holding indices
          $\{x_1, \dots, x_n, y_1, \dots, y_m\}$.
        \Comment{Marginalize over parents not appearing as $x$ or $y$.}
        \State \Return $J$

    \end{algorithmic}
\end{algorithm}
\end{figure}
\subsection{Fitness Functions}
\label{sec:ga-fitnesses}

Our algorithm uses two different kinds of fitness functions:
\begin{enumerate}
    \item The total
      variational distance (TVD) between the joint distribution computed from an individual and the observed frequencies.
    \item Causal inflences along penalized edges, as generalized from
        the definition given by \autoref{eq:causal-influence-defn} in \autoref{sec:bell}.
\end{enumerate}

Dealing with each in turn, the TVD is calculated by taking the vector
1-norm between the flattened joint distribution tensor
the observable variables calculated as in
\autoref{sec:ga-individuals} and the flattened observed frequencies,
\begin{equation}
  f_{\TVD}(I) = \|J_\obs^\flat(I) - F^\flat\|_1,
\end{equation}
where $I$ is an individual with joint distribution tensor $J_\obs(I)$
over all observables, $F$ is
the tensor of observed frequencies, and where $\flat$ indicates flattening---
that is, reduction of an arbitrary-rank tensor to a rank-1 tensor.

As discussed in \autoref{sec:bell}, we adopt a definition of causal influence
that allows us to reason even in lieu of interventions. Our definition of the
causal influence $C_{a\to b}(I)$ for an individual $I$ proceeds in three steps.
First, we maximize over pairs of settings of $a$ to find which are most distinguishable
through observations of $b$ alone. We then maximize over the conditions under which
these observations are made, represented by maximizing over feasible assignments to
the parents of $b$. Finally, we \emph{marginalize} over those nodes which are also
parents of $a$ to prevent ``hiding'' causal influence; this is illustrated in
\autoref{fig:grandparent-measure-example}.

\begin{figure}
    \begin{centering}
           \includegraphics{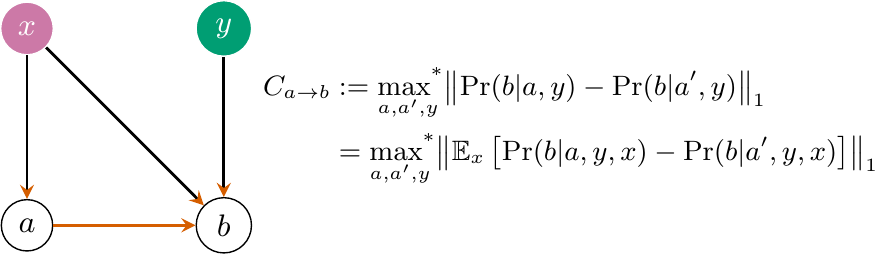}
   


    \end{centering}
    \caption{
        \label{fig:grandparent-measure-example}
        An example of the causal influence measure $C_{a\to b}(I)$
        given by \autoref{eq:causal-influence-defn} applied to a
        more complicated graph. The random variable $y$ is conditioned on and maximized
        over, as it is a parent of $b$ but not of $a$. By contrast, the variable
        $x$ is also a parent of $a$ and so it is marginalized over,
        resulting in the causal influence definition at right.
    }
\end{figure}

\subsection{Genetic Operators}
\label{sec:ga-operators}

Having defined the mapping from models to individuals, we complete the
specification of our algorithm by detailing the various genetic operators
which act on these individuals.

\paragraph{Creation} \label{sec:ga-creation} In order to create a new individual $I$, we must specify
a new conditional distribution at each node of the causal graph. We do so
randomly by assigning a tensor $X_i(I)$ with entries drawn uniformly from $[0, 1]$
to each node, then renormalizing to ensure\\ $\sum\limits_{ x_i} \Pr(x_i | \pa_i)=1$.
Using the tensor notation defined above,
\begin{align}
    \label{eq:create-renormalize}
    X_i(I)[j_0, j_1, \dots, j_{|\pa_i|}] = \frac{
        \tilde{X}_i(I)[j_0, j_1, \dots, j_{|\pa_i|}]
    }{
        \sum_{j'_0} \tilde{X}_i(I)[j'_0, j_1, \dots, j_{|\pa_i|}]
    },
\end{align}
where $\tilde{X}_i(I)[j_0, \dots, j_{|\pa_i|}] \sim \Uni(0, 1)$ is the unnormalized
tensor of $I$ at $X_i$.

\paragraph{Crossover/Mating} \label{sec:ga-crossover}Given two individuals $I_1$ and $I_2$, we
mate them to produce two new individuals $I_1'$ and $I_2'$ by swapping
the tensors at each node with probability $p_\chi = 0.5$. That is, for each
node $x_i$, the corresponding tensor $X_i(I_1')$ of $I_1'$ is given by
\begin{equation}
  X_i(I_1') = \begin{cases}
      X_i(I_2) & \text{with probability } p_\chi \\
      X_i(I_1) & \text{with probability } 1 - p_\chi
  \end{cases}.
\end{equation}

\paragraph{Mutation} \label{sec:ga-mutation} Given a single individual that has been selected for
mutation, we proceed by first picking a node $x_\mu$ on the causal graph
uniformly at random, with corresponding tensor (assuming $n$ parents)
$X_\mu[i_0, \dots, i_n]$. 

One of the conditional events represented by $X_\mu$ (that is, a single element of the tensor) is selected at random and the value (and hence the probability assigned to the selected outcome) is randomly increased or decreased by a sample from a zero-mean Gaussian distribution, where the variance is a user supplied parameter. The mutated element of $X_\mu$ is then clipped to the interval $[0, 1]$, and the relevant tensor index renormalized
such that \autoref{eq:create-renormalize} holds.

By way of example, if we had a node $a$ with binary values, which in turn had one parent $x$ also with binary values, then the information pertaining to that node would be stored in a $2\times 2$ tensor $A[a, x]$ corresponding to the probability distribution
\begin{align}
\Pr(A = a | X = x) & = A[a, x] =
\kbordermatrix{
&x=0&x=1\\
a=0& \alpha & \beta\\
a= 1 & \gamma & \delta \\
},
\end{align}
where $\alpha$ represents the probability of $a$ being 0 given $x$ is 0 and so on. From this it can be seen that it is necessary that $\alpha + \gamma = \beta + \delta = 1$. One of $\alpha$, $\beta$, $\gamma$ or $\delta$ would be modified as discussed above, and the remaining values renormalized to ensure that the relevant probabilities continue to sum to 1.

\paragraph{Selection} \label{sec:ga-selection} For selection we used an unmodified version of the NSGA-II algorithm \cite{NSGA}. NSGA-II uses a fast sort algorithm to locate the non-dominated individuals and then applies a crowding distance sorting algorithm to prefer those individuals that explore different parts of the pareto front.
The ``best'' $\mu$ individuals are retained for the next generation.

\subsection{Decomposition of the multi-objective optimization (the Island Model)}\label{sec:island}

\begin{figure*}[ht]
    \begin{centering}
        \includegraphics[width=1.5\columnwidth]{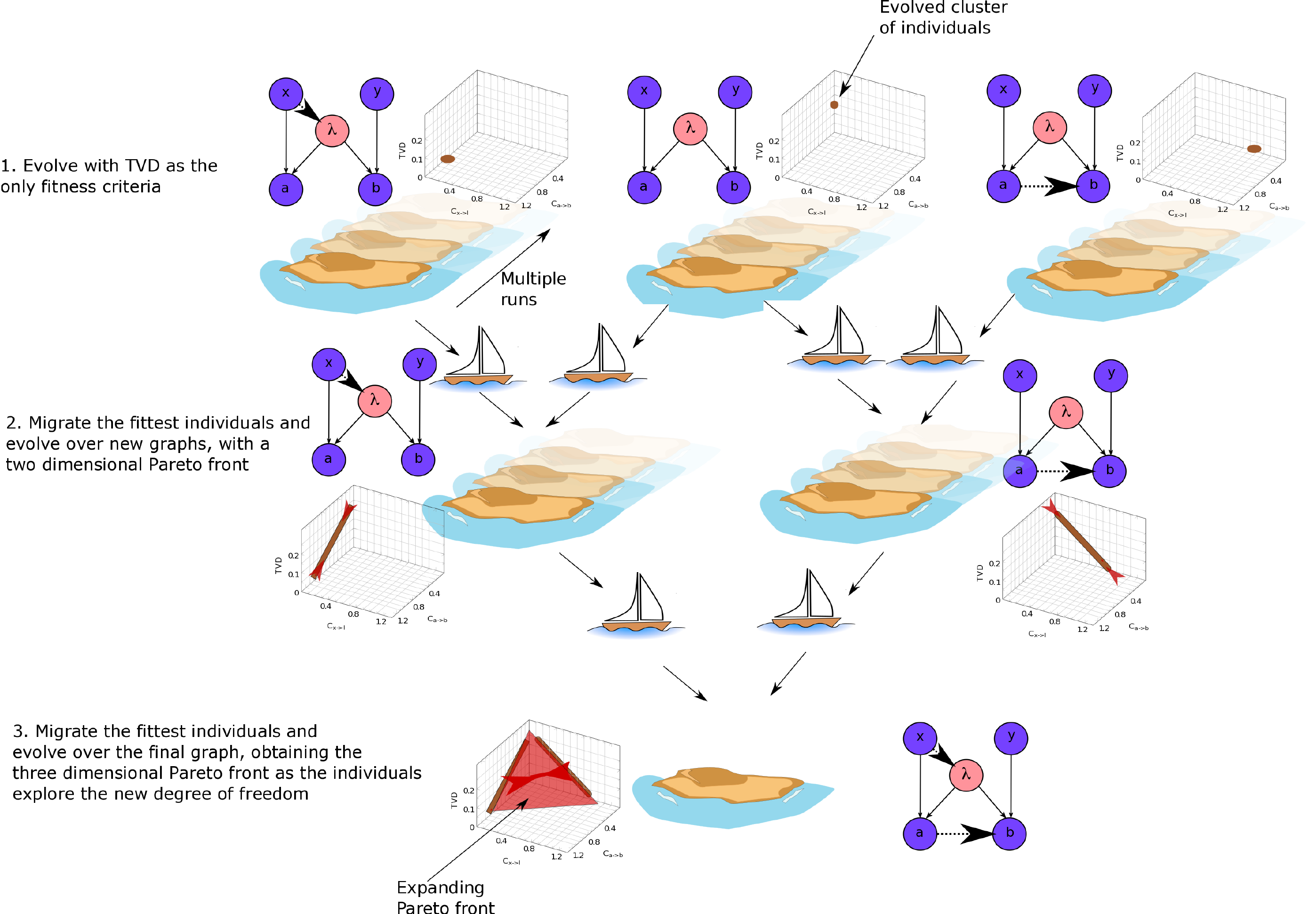}
     \caption{ \label{fig:islands}
        \emph{Island model} diagram showing the steps in evolving a three-dimensional Pareto front. This allows a the edges of the Pareto front to be found by exploring lower dimensional graphs with lower populations. Multiple runs in each of step 1 and step 2 can be done concurrently. 
    } \end{centering}
\end{figure*}

Here we present an enhancement to the basic genetic algorithm discussed above that aids the discovery of the global Pareto front in multi-dimensional scenarios, where---as is the case here---it is possible to evolve populations to occupy the extremes of any particular front.

As discussed in \autoref{sec:ea_implement} it is well known that the NSGA-II crowding becomes less effective with the exponential increase in the size of the front with the number of dimensions. However, in our case are we able  to force the population to start at extreme points of the Pareto front by pre-evolving the population on structurally reduced graphs or with reduced fitness criteria. These populations are able to seed the graph we wish to explore and spread over the front, fleshing it out over multiple runs. This can then be repeated as we increase the dimensions of the fronts. This is not dissimilar to the mechanism used in NeuroEvolution of Augmenting Topologies (NEAT) \cite{NEAT} where populations are evolved on small neural networks prior to allowing additional links to be added. A similar idea of decomposing the objectives is explored in \cite{liu2014}. Effectively, where a multiple dimension Pareto front needs to be explored, different populations are evolved on all permutations of the simpler graphs (on seperate ``islands'') before being brought together for evolution over the full graph. This is illustrated in \autoref{fig:islands}.

This technique allows us to find a three-dimensional Pareto front based on a graph with two causal edges, from $x\rightarrow \lambda$ as well as $a\rightarrow b$. This was evolved using five runs of the ``island model'' detailed above. For each run the initial islands had a population of 300 and comprised 4 runs of 400 generations. The initial runs found populations clustered around the three extremes: (1) min(TVD), hold $C_{x\rightarrow \lambda}=0$, $C_{a\rightarrow b}=0$; (2)
 min($C_{a\rightarrow b}$), hold TVD $= 0$, $C_{x\rightarrow \lambda}=0$; and (3) min($C_{x\rightarrow \lambda})$, hold TVD $= 0$, $C_{a\rightarrow b}=0$. 

The second set of islands take the relevant individuals generated above, reduce them to the best 400 individuals representing the extreme of the Pareto fronts for that island and transplant them to expanded causal network graphs. In this case there are two second generation islands: one generating the two-dimensional Pareto front for \{TVD and $C_{a\rightarrow b}$\}, with $C_{x\rightarrow \lambda}$ held to be 0 (i.e. no causal $x\rightarrow \lambda$ link); and the second the two-dimensional Pareto front \{TVD and $C_{x\rightarrow \lambda}$\}, with $C_{a\rightarrow b}$ held to be 0. These populations are then evolved on the respective causal networks generating two dimensional Pareto Fronts similar to \autoref{fig:sfigABTVD}.  These populations are placed in an $\epsilon$-Dominance archive. (In other-words they are only kept if they dominate all previous individuals by at least $\epsilon$, where in this implementation $\epsilon$ was $10^{-8}+10^{-5}*|value|$). The entire process so far is repeated  several times (in this experiment 5 times)  to ensure we have 2,000 suitable individuals in the archive. These individuals are, effectively, clustered on the two dimensional fronts specified in the second set of islands. This final population is used to generate the 3-dimensional Pareto front shown in \autoref{fig:approxFront}. The final island had a population of  2,000 individuals (extracted initially from the $\epsilon$-Dominance archive), evolved for 800 generations. This constituted one run of the island model. The Model was run 5 times, with every individual generated by the model being submitted to (but not necessarily accepted by) the global $\epsilon$-Dominance archive. 

To illustrate the advantage of using this model, we have also plotted (in red) the best Pareto front found using just the basic algorithm (i.e. evolving only over the full graph). These additional points were collected over 8 runs, using a high population (6,000) and represented five times the computing power required for the Island Model. As can be seen the global  $\epsilon$-Dominance archive for  the basic algorithm contained few individuals on the best  global Pareto front found by the Island Model. The Pareto front for the Island Model (plotted blue) does appear to be a viable candidate for the actual global front, indicating that - for this model - the trade-offs in the different causality violations considered is linear.  The front fits a linear plane with a Pearson's $\rho^2$ value of 0.9902, the non-fitting points being those with extremely low TVD values (the points which appear on the horizontal part of the mesh). While this still needs to be investigated further we believe it is related to experimental noise which might require increased causal violation to match the noisy data exactly. So far as we are aware this is the first evidence that tradeoffs in multiple causal violations are also linear for such graphs.
\vspace{12pt}

\begin{figure}
\includegraphics[width=0.8\linewidth]{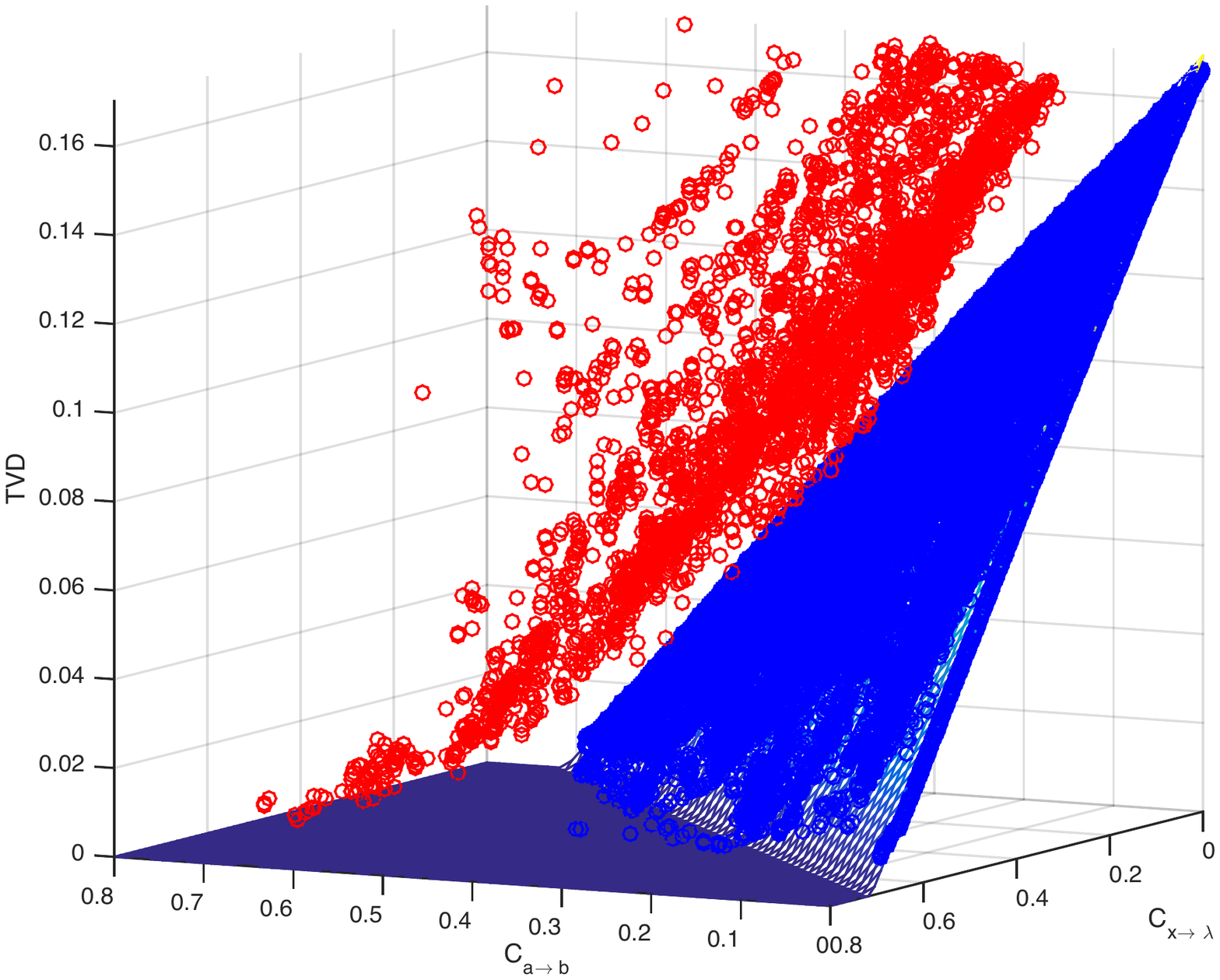}
\caption{The 3D Pareto front for $a\rightarrow b$ and $x \rightarrow \lambda$ violations found using the ``Island Model'' is shown in blue. Beneath it is a linear mesh which fits the Pareto front with a Pearson's $\rho^2$ of 0.9902. By Comparison the non-optimal Pareto front found by combining the best individuals from 8 runs using the basic algorithm (i.e. no pre-evolution) on the full graph is shown in red. The front shown in blue took a fraction of the computing time to find compared to the non-optimal red runs.}
\label{fig:approxFront}
\end{figure}

\subsection{Previous Work and Design Decisions}\label{sec:previous}

Although there has been previous work in using genetic algorithms to explore Bayesian causal networks (e.g. \cite{Guimaraes2013}, \cite{Muruzabal2007} and \cite{Larranaga2013}), the focus of such works has been to create the network and the links therein. For instance, \cite{Ross2006} uses a multi-objective genetic algorithm (MOGA) to evolve dynamic Bayesian networks. There the multi-objectives explored were the ability of the evolved networks to explain the data, compared with the complexity of the network in question. The MOGA was used instead of, for instance, a minimum description length (MDL)  constraint. In all of these cases the network is being used to model something of interest and then, given some observed values, infer the likely causes. The genetic algorithms are used to construct different models which are then trained, typically the success (or otherwise) of a particular model being its performance on withheld data.

Our work differs because of the way we wish to utilize the Bayesian causal networks, specifically we specify the networks we are interested in, namely those which model a physical view of ``reality'' with specified local causality violation. 
Training such networks to replicate observed correlations is of limited interest because successful training results in one specific probability distribution that explains the data. What we are interested in  finding are all the relevant probability distributions where the ability to match the observed correlations is contrasted with the strength of the local causality violations. The evolutionary algorithm is used, not to evolve networks, but rather to find these probability distributions given the network. The MOEA is used to guide evolution along these Pareto fronts.

In order to explore the Pareto front some type of MOEA algorithm is required. MOEA on two or three dimensions are relatively well understood. Algorithms to explore large dimensions are still an active area of research (see for example \cite{NSGAIII}). Since our initial experiments (reported here) would only require causal networks with no penalized edges (a single-value optimization), one penalized edge (a MOEA with a two-dimensional Pareto front) or two penalized edges (a MOEA with a three-dimensional Pareto front), we decided to use the well understood NSGA-II \cite{NSGA}. Although NSGA-II attempts to return the whole of the Pareto front in a single run it was quite clear that the search space (being the required probability assignments for all the nodes in the network) was not smooth, even though the Pareto fronts may be (and, in fact, turned out to be) smooth. Given this an $\epsilon$-Dominance Archive \cite{laumanns2002} was maintained and updated through multiple runs. 
In order to maintain diversity between runs the archive was not used to guide the evolution, but rather served as an updated archive of the best Pareto front found so far.
After completing multiple runs, the individuals in the archive thus represent the Pareto front
for the entire procedure, rather than for each run taken in isolation.

In \autoref{sec:island} we describe how the $\epsilon$-Dominance Archive generated from a lower dimension front can be used to seed evolution when a higher dimensional Pareto front is explored, in a manner not dissimilar to the algorithm presented by \citet{liu2014} or utilized by \citet{NEAT}.

\subsection{Implementation methodology and details}\label{sec:ea_implement}

Our initial runs with the EA (i.e. the MOEA with a single objective, being to minimize the TVD) were used to verify that the EA could match known results. In this case we start with a causal network that reflects Bell's non-locality assumptions (as shown in \autoref{fig:bellgraphA}) and for various values of  $\gamma$ in \autoref{eq:gamma} use the EA to try and match the experimentally observed joint probability distribution. This is a single-objective EA, with fitness being governed solely by the TVD, i.e.\ by how closely the observed probability distribution of the Model (being the observed joint probability distribution for an individual over the local causal graphs) matches the experimental data. It is known that when $\gamma$ is less than $\sqrt{2}-1$ the observed data can be modeled with a local causal network. The TVD values should increase as $\gamma$ increases to 1 since the empirical distribution no longer factorizes into a locally causal distribution.  

An initial population of 300 ($\mu=300,\lambda=300$) was chosen, with the probability of crossover, being 0.1 (and mutation 0.9). The mutation operator used a standard deviation of 0.1 (see \autoref{sec:ga-operators}). As is typical for experiments using genetic algorithms no systematic attempt was made to find the ``best'' parameters for the algorithm. Rather during the course of some initial testing runs, runs with variations of parameters were tried and the parameters of the ones that seemed to find solutions quickest were used. Population sizes reflected those minimum populations required to avoid runs being trapped early on in local minima. The parameters reported are not reported in a claim of optimality, but rather are reported for the purposes of reproducibility.  In any case, once a solution is obtained, its validity does not depend on the means by which it was found.

\autoref{fig:gammagraph}a shows experimentally measured density matrices for a range of state $\gamma$ values. The reduction in coherence is observed as decreasing off-diagonal terms.  \autoref{fig:gammagraph}b shows the minimum TVD values emerging from 20 runs of the graph for various $\gamma$ values together with a linear line fitting the data, running  from the known y-intercept of 0 TVD for $\gamma=(\sqrt{2}-1)$. As can be seen the EA fits the expected linear results (Pearson's $\rho^2=0.9952$).

Having ensured that the algorithm could correctly match the known results on a  causal network consistent with Bell's non-locality assumptions, the next stage is to require a relaxation of local causality to allow the EA (now operating as a MOEA)  to match the correlations present in entangled states. As an initial step, we examined relaxing one casual edge at a time, beginning with a causal influence from $a$ to $b$---that is, Alice's outcome is allowed to influence Bob's. This is illustrated in \autoref{fig:bellgraphC}. This becomes a multi-objective problem with a two-dimensional Pareto front, ostensibly well within the capabilities of NSGA-II. The tensor contractions required are not overly complex but with increasing numbers of hidden variables (as a result of additional causal links) each run takes a non-trivial amount of time. As is typical where the search landscape (being the underlying conditional probability distributions) is not smooth (even though the fitnesses such distributions reduce to are smooth) a number of runs failed to converge to any part of the Pareto front, with most runs finding part, but not all of the Pareto front. In \autoref{fig:multixy} we show the individual results of 40 such runs. As can be seen from the figure just under half of the 40 runs had a large percentage of their front non-opitmal, with approximately half the runs being plotted on top of each other on the Pareto Front. To generate the final Pareto front each of the individuals in the  $\epsilon$-Dominance archive from each of the 40 runs are submitted to the \emph{global}  $\epsilon$-Dominance archive, so that the best estimate of the true Pareto front can emerge, as shown in \autoref{fig:multixyGlobal}.

Two points arise from these results. The first is that the front appears to be linear, that is increasing locality violations allows observed (quantum) correlations to be more exactly matched, the tradeoff being \textit{linear} in nature. As far as we are aware this was not previously known. The second arises from the number of failed or only partially successful runs. In particular we note that while in the majority of runs the MOEA was able to find many points on (or close to) the Pareto front, other runs could be trapped and all runs had difficulty at either extreme of the front.  It is clear that the search landscape in general is not smooth - the interplay between the conditioning on the hidden local variable and the other probability distributions allow the MOEA to become trapped in some local minima. The larger front (in this case a two-dimensional line) allowed the population to ``slide'' away from the edge cases. In addition as observed in \cite{Wood2015The} it is likely the edge cases represent very specific distributions. Whilst some of these observed difficulties could, in part, be ameliorated by using a larger population and relying on the NSGA-II crowding mechanism to prevent such slippage, as is known this will not be feasible if the front consists of three (or more) dimensions. The front grows exponentially with the number of dimensions, requiring an exponential increase in population size. An alternative MOEA such as NSGA-III may help but each alternative comes with their own difficulties and assumptions. It is, however, possible to use the specifics of the problem space to address these concerns. We know we can evolve the population on a more limited graph (such as the purely local graph) and force the population to find the lowest TVD with a causal violation of zero (i.e.  in the local graph $C_{a\rightarrow b}=0 \text{, since there is no link } a\rightarrow b$). This evolved population can then be ``transplanted'' on to a graph that does have an $a\rightarrow b$ link (e.g.  \autoref{fig:bellgraphC}). The other extreme (i.e. lowest  $C_{a\rightarrow b}$ violation for TVD$=0$) can be found with a small alteration to the fitness function. To find this point we evolve the population on the graph representing \autoref{fig:bellgraphC} but with a single-objective fitness function, implemented as minimizing the TVD, but where two individuals have the same TVD, the one  with the lowest causality violation is preferred. This drives the population towards zero TVD and then minimizes the causality. Even with this the observed correlations were unable to achieve an exact $\rm{TVD}=0$, it is speculated this is a result of experimental noise. The ability to generate populations (distributions) that sat at the extreme points of the Pareto front allowed the entire two-dimensional front to be revealed and, as discussed below, can similarly be utilized to reveal the three-dimensional front created by two simultaneously relaxed local causality constraints.

\begin{figure*}[ht]
\scalebox{.75}{
    \subfloat[]
    {
        \label{fig:multixy}
        \includegraphics[width=.6\linewidth]{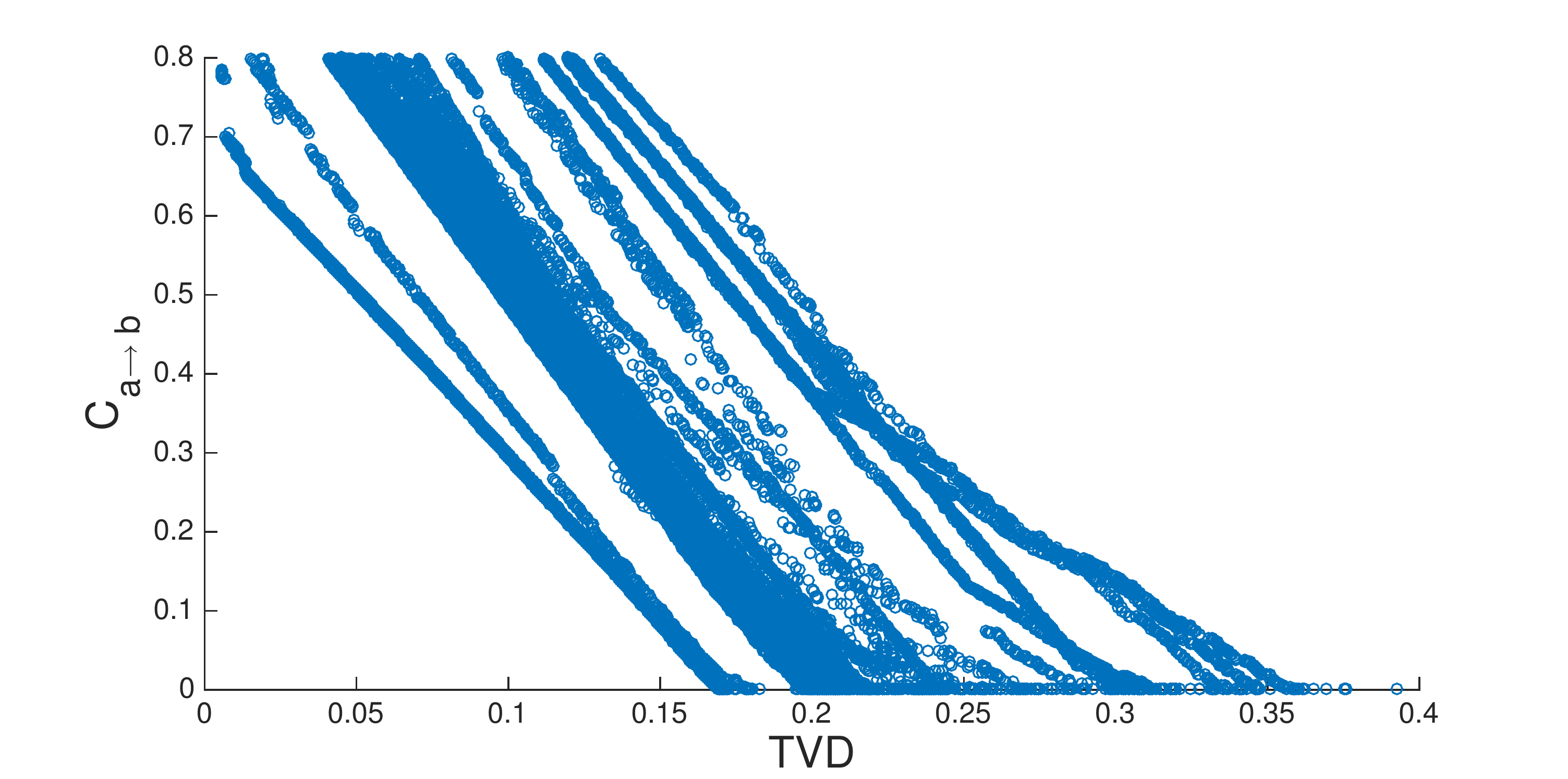}
    }
 \subfloat[]{
         \includegraphics{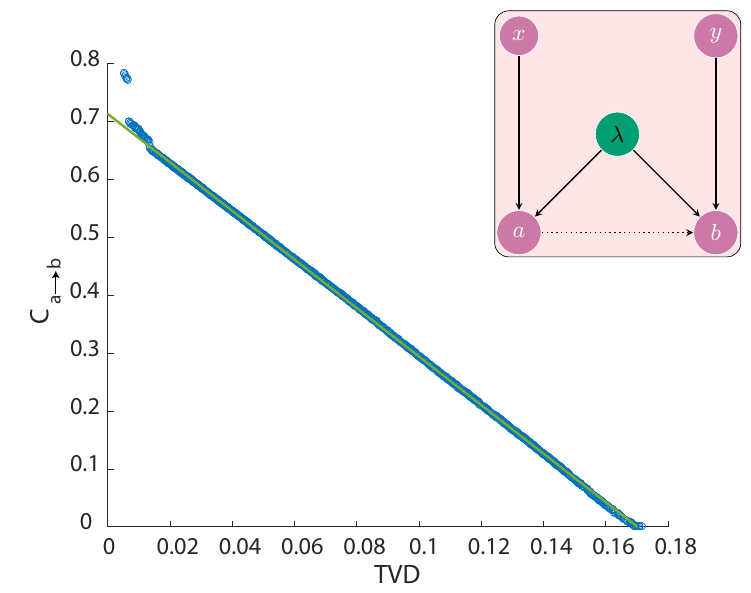}
   
	\label{fig:multixyGlobal}
  
    }
    }
    \caption{ \label{fig:sfigABTVD} The results of 40 typical runs of the EA with a causal graph allowing ``Alice to Bob'' local causality violations (see \autoref{fig:bellgraphC}). (a) For the purpose of producing this graph, each run had its own $\epsilon$-Dominance archive. As can be seen several runs failed to find the correct front at all, indicating that the interplay between hidden nodes and conditioned variables results in a non-trivial search when attempting to match the observed distributions of observable variables. (b) At the conclusion of these 40 runs the $\epsilon$-Dominances archive are combined to form the \emph{gobal}  $\epsilon$-Dominance archive generating the best estimate of the actual Pareto. The front is well fit by a straight line (Pearson's $\rho^2$ value of 0.997, with bisquare robust fitting). Exact fitting of the distribution (very low TVD) requires additional causality violation. Experimental noise might be reason for this. Although around 100 runs were conducted to produce the reported results (\autoref{fig:2D}) very few additional points were found on the Pareto Front.
    }
\end{figure*}

\section{Discussion}\label{sec:discussion}

In this work, we have developed a method to allow the study of several relaxations of local hidden variables models simultaneously in a single framework using the tools of casual networks and genetic algorithms.  

With further refinement, we hope that our approach can shed light on other scenarios where quantum correlations display richer structure than classical systems would allow.  For example, generalizations of the standard Bell scenario to more stations \cite{Werner2001All,Zukowski2002Do} and more outcomes \cite{Collins2002Bell,Collins2004A}, as well as multiple hidden variables \cite{Branciard2010Characterizing, Fritz2012Beyond,Henson2014Theory}.  In the latter scenario, very little is known since classical correlations are no longer given by linear constraints.  Very recently, Chaves has used the framework of causal networks to systematically study such higher-order constraints \cite{Chaves2015Polynomial}.  Such measured quantities will be particularly useful to our approach as they can be seen as highly relevant coarse grainings of the exponentially growing data space.  Such dimension reduction techniques will be crucial for scaling up our numerical algorithm to the analysis of multi-party quantum correlations.

In addition, there is nothing specifically ``quantum'' about our core numerical methods.  Thus, our approach should find application outside of the problem of understanding quantum correlations.  Recently Lee and Spekkens have also used inspiration from the causal analysis of quantum correlations to develop new causal discovery protocols \cite{Lee2015Causal}.  Like Lee and Spekkens, we depart from the usual considerations of observed correlation to considering the entire joint probability distribution.  Our goals differ, however; whereas the aim of Lee and Spekkens 
is to find all casual models consistent with data, our goal is to find non-dominated models  of the plausible correlations.  These two approaches are likely to find a harmonious union in the future.

\begin{acknowledgments}
We thank Martin Ringbauer and Andrew White for discussions. This work was supported by the US Army Research Office grant numbers W911NF-14-1-0098 and W911NF-14-1-0103, and by the Australian Research Council Centre of Excellence for Engineered Quantum Systems CE110001013. 
STF acknowledges support from an Australian Research Council Future Fellowship FT130101744. AP acknowledges an Australian Research Council Discovery Early Career Researcher Award, Project No. DE140101700 and an RMIT University Vice-Chancellor’s Senior Research Fellowship.
RK acknowledges additional support from the Excellence Initiative of the German Federal and State Governments (Grants ZUK 43 \& 81) and the DFG.
\end{acknowledgments}

\bibliography{causal-discovery}

\begin{thebibliography}{55}%
\makeatletter
\providecommand \@ifxundefined [1]{%
 \@ifx{#1\undefined}
}%
\providecommand \@ifnum [1]{%
 \ifnum #1\expandafter \@firstoftwo
 \else \expandafter \@secondoftwo
 \fi
}%
\providecommand \@ifx [1]{%
 \ifx #1\expandafter \@firstoftwo
 \else \expandafter \@secondoftwo
 \fi
}%
\providecommand \natexlab [1]{#1}%
\providecommand \enquote  [1]{``#1''}%
\providecommand \bibnamefont  [1]{#1}%
\providecommand \bibfnamefont [1]{#1}%
\providecommand \citenamefont [1]{#1}%
\providecommand \href@noop [0]{\@secondoftwo}%
\providecommand \href [0]{\begingroup \@sanitize@url \@href}%
\providecommand \@href[1]{\@@startlink{#1}\@@href}%
\providecommand \@@href[1]{\endgroup#1\@@endlink}%
\providecommand \@sanitize@url [0]{\catcode `\\12\catcode `\$12\catcode
  `\&12\catcode `\#12\catcode `\^12\catcode `\_12\catcode `\%12\relax}%
\providecommand \@@startlink[1]{}%
\providecommand \@@endlink[0]{}%
\providecommand \url  [0]{\begingroup\@sanitize@url \@url }%
\providecommand \@url [1]{\endgroup\@href {#1}{\urlprefix }}%
\providecommand \urlprefix  [0]{URL }%
\providecommand \Eprint [0]{\href }%
\providecommand \doibase [0]{http://dx.doi.org/}%
\providecommand \selectlanguage [0]{\@gobble}%
\providecommand \bibinfo  [0]{\@secondoftwo}%
\providecommand \bibfield  [0]{\@secondoftwo}%
\providecommand \translation [1]{[#1]}%
\providecommand \BibitemOpen [0]{}%
\providecommand \bibitemStop [0]{}%
\providecommand \bibitemNoStop [0]{.\EOS\space}%
\providecommand \EOS [0]{\spacefactor3000\relax}%
\providecommand \BibitemShut  [1]{\csname bibitem#1\endcsname}%
\let\auto@bib@innerbib\@empty
\bibitem [{\citenamefont {Leifer}(2007)}]{Leifer2007Conditional}%
  \BibitemOpen
  \bibfield  {author} {\bibinfo {author} {\bibfnamefont {M.~S.}\ \bibnamefont
  {Leifer}},\ }\bibfield  {title} {\enquote {\bibinfo {title} {Conditional
  density operators and the subjectivity of quantum operations},}\ }in\ \href
  {\doibase 10.1063/1.2713456} {\emph {\bibinfo {booktitle} {{AIP} {Conference}
  {Proceedings}}}},\ Vol.\ \bibinfo {volume} {889}\ (\bibinfo  {publisher} {AIP
  Publishing},\ \bibinfo {year} {2007})\ pp.\ \bibinfo {pages}
  {172--186}\BibitemShut {NoStop}%
\bibitem [{\citenamefont {Leifer}\ and\ \citenamefont
  {Poulin}(2008)}]{Leifer2008Quantum}%
  \BibitemOpen
  \bibfield  {author} {\bibinfo {author} {\bibfnamefont {M.~S.}\ \bibnamefont
  {Leifer}}\ and\ \bibinfo {author} {\bibfnamefont {D.}~\bibnamefont
  {Poulin}},\ }\bibfield  {title} {\enquote {\bibinfo {title} {Quantum
  graphical models and belief propagation},}\ }\href {\doibase
  10.1016/j.aop.2007.10.001} {\bibfield  {journal} {\bibinfo  {journal} {Annals
  of Physics}\ }\textbf {\bibinfo {volume} {323}},\ \bibinfo {pages} {1899}
  (\bibinfo {year} {2008})}\BibitemShut {NoStop}%
\bibitem [{\citenamefont {Fritz}(2012)}]{Fritz2012Beyond}%
  \BibitemOpen
  \bibfield  {author} {\bibinfo {author} {\bibfnamefont {T.}~\bibnamefont
  {Fritz}},\ }\bibfield  {title} {\enquote {\bibinfo {title} {Beyond {Bell}'s
  theorem: correlation scenarios},}\ }\href {\doibase
  10.1088/1367-2630/14/10/103001} {\bibfield  {journal} {\bibinfo  {journal}
  {New Journal of Physics}\ }\textbf {\bibinfo {volume} {14}},\ \bibinfo
  {pages} {103001} (\bibinfo {year} {2012})}\BibitemShut {NoStop}%
\bibitem [{\citenamefont {Fitzsimons}\ \emph {et~al.}(2013)\citenamefont
  {Fitzsimons}, \citenamefont {Jones},\ and\ \citenamefont
  {Vedral}}]{Fitzsimons2013Quantum}%
  \BibitemOpen
  \bibfield  {author} {\bibinfo {author} {\bibfnamefont {J.}~\bibnamefont
  {Fitzsimons}}, \bibinfo {author} {\bibfnamefont {J.}~\bibnamefont {Jones}}, \
  and\ \bibinfo {author} {\bibfnamefont {V.}~\bibnamefont {Vedral}},\
  }\bibfield  {title} {\enquote {\bibinfo {title} {Quantum correlations which
  imply causation},}\ }\href {http://arxiv.org/abs/1302.2731} {\bibfield
  {journal} {\bibinfo  {journal} {arXiv:1302.2731}\ } (\bibinfo {year}
  {2013})}\BibitemShut {NoStop}%
\bibitem [{\citenamefont {Leifer}\ and\ \citenamefont
  {Spekkens}(2013)}]{Leifer2013Formulating}%
  \BibitemOpen
  \bibfield  {author} {\bibinfo {author} {\bibfnamefont {M.~S.}\ \bibnamefont
  {Leifer}}\ and\ \bibinfo {author} {\bibfnamefont {R.~W.}\ \bibnamefont
  {Spekkens}},\ }\bibfield  {title} {\enquote {\bibinfo {title} {Towards a
  formulation of quantum theory as a causally neutral theory of {Bayesian}
  inference},}\ }\href {\doibase 10.1103/PhysRevA.88.052130} {\bibfield
  {journal} {\bibinfo  {journal} {Physical Review A}\ }\textbf {\bibinfo
  {volume} {88}},\ \bibinfo {pages} {052130} (\bibinfo {year}
  {2013})}\BibitemShut {NoStop}%
\bibitem [{\citenamefont {Leifer}\ and\ \citenamefont
  {Spekkens}(2014)}]{Leifer2014A}%
  \BibitemOpen
  \bibfield  {author} {\bibinfo {author} {\bibfnamefont {M.~S.}\ \bibnamefont
  {Leifer}}\ and\ \bibinfo {author} {\bibfnamefont {R.~W.}\ \bibnamefont
  {Spekkens}},\ }\bibfield  {title} {\enquote {\bibinfo {title} {A {Bayesian}
  approach to compatibility, improvement, and pooling of quantum states},}\
  }\href {\doibase 10.1088/1751-8113/47/27/275301} {\bibfield  {journal}
  {\bibinfo  {journal} {Journal of Physics A: Mathematical and Theoretical}\
  }\textbf {\bibinfo {volume} {47}},\ \bibinfo {pages} {275301} (\bibinfo
  {year} {2014})}\BibitemShut {NoStop}%
\bibitem [{\citenamefont {Brukner}(2014)}]{Brukner2014Quantum}%
  \BibitemOpen
  \bibfield  {author} {\bibinfo {author} {\bibfnamefont {{\v{C}}.}~\bibnamefont
  {Brukner}},\ }\bibfield  {title} {\enquote {\bibinfo {title} {Quantum
  causality},}\ }\href {\doibase 10.1038/nphys2930} {\bibfield  {journal}
  {\bibinfo  {journal} {Nature Physics}\ }\textbf {\bibinfo {volume} {10}},\
  \bibinfo {pages} {259} (\bibinfo {year} {2014})}\BibitemShut {NoStop}%
\bibitem [{\citenamefont {Cavalcanti}\ and\ \citenamefont
  {Lal}(2014)}]{Cavalcanti2014On}%
  \BibitemOpen
  \bibfield  {author} {\bibinfo {author} {\bibfnamefont {E.~G.}\ \bibnamefont
  {Cavalcanti}}\ and\ \bibinfo {author} {\bibfnamefont {R.}~\bibnamefont
  {Lal}},\ }\bibfield  {title} {\enquote {\bibinfo {title} {On modifications of
  {Reichenbach's} principle of common cause in light of {Bell's} theorem},}\
  }\href {\doibase 10.1088/1751-8113/47/42/424018} {\bibfield  {journal}
  {\bibinfo  {journal} {Journal of Physics A: Mathematical and Theoretical}\
  }\textbf {\bibinfo {volume} {47}},\ \bibinfo {pages} {424018} (\bibinfo
  {year} {2014})}\BibitemShut {NoStop}%
\bibitem [{\citenamefont {Wood}\ and\ \citenamefont
  {Spekkens}(2015)}]{Wood2015The}%
  \BibitemOpen
  \bibfield  {author} {\bibinfo {author} {\bibfnamefont {C.~J.}\ \bibnamefont
  {Wood}}\ and\ \bibinfo {author} {\bibfnamefont {R.~W.}\ \bibnamefont
  {Spekkens}},\ }\bibfield  {title} {\enquote {\bibinfo {title} {The lesson of
  causal discovery algorithms for quantum correlations: causal explanations of
  {Bell}-inequality violations require fine-tuning},}\ }\href {\doibase
  10.1088/1367-2630/17/3/033002} {\bibfield  {journal} {\bibinfo  {journal}
  {New Journal of Physics}\ }\textbf {\bibinfo {volume} {17}},\ \bibinfo
  {pages} {033002} (\bibinfo {year} {2015})}\BibitemShut {NoStop}%
\bibitem [{\citenamefont {Ried}\ \emph {et~al.}(2015)\citenamefont {Ried},
  \citenamefont {Agnew}, \citenamefont {Vermeyden}, \citenamefont {Janzing},
  \citenamefont {Spekkens},\ and\ \citenamefont {Resch}}]{Ried2015A}%
  \BibitemOpen
  \bibfield  {author} {\bibinfo {author} {\bibfnamefont {K.}~\bibnamefont
  {Ried}}, \bibinfo {author} {\bibfnamefont {M.}~\bibnamefont {Agnew}},
  \bibinfo {author} {\bibfnamefont {L.}~\bibnamefont {Vermeyden}}, \bibinfo
  {author} {\bibfnamefont {D.}~\bibnamefont {Janzing}}, \bibinfo {author}
  {\bibfnamefont {R.~W.}\ \bibnamefont {Spekkens}}, \ and\ \bibinfo {author}
  {\bibfnamefont {K.~J.}\ \bibnamefont {Resch}},\ }\bibfield  {title} {\enquote
  {\bibinfo {title} {A quantum advantage for inferring causal structure},}\
  }\href {\doibase 10.1038/nphys3266} {\bibfield  {journal} {\bibinfo
  {journal} {Nature Physics}\ }\textbf {\bibinfo {volume} {11}},\ \bibinfo
  {pages} {414} (\bibinfo {year} {2015})}\BibitemShut {NoStop}%
\bibitem [{\citenamefont {Einstein}\ \emph {et~al.}(1935)\citenamefont
  {Einstein}, \citenamefont {Podolsky},\ and\ \citenamefont
  {Rosen}}]{Einstein1935Can}%
  \BibitemOpen
  \bibfield  {author} {\bibinfo {author} {\bibfnamefont {A.}~\bibnamefont
  {Einstein}}, \bibinfo {author} {\bibfnamefont {B.}~\bibnamefont {Podolsky}},
  \ and\ \bibinfo {author} {\bibfnamefont {N.}~\bibnamefont {Rosen}},\
  }\bibfield  {title} {\enquote {\bibinfo {title} {Can quantum-mechanical
  description of physical reality be considered complete?}}\ }\href {\doibase
  10.1103/PhysRev.47.777} {\bibfield  {journal} {\bibinfo  {journal} {Physical
  Review}\ }\textbf {\bibinfo {volume} {47}},\ \bibinfo {pages} {777} (\bibinfo
  {year} {1935})}\BibitemShut {NoStop}%
\bibitem [{\citenamefont {Bell}(1964)}]{Bell1964On}%
  \BibitemOpen
  \bibfield  {author} {\bibinfo {author} {\bibfnamefont {J.~S.}\ \bibnamefont
  {Bell}},\ }\bibfield  {title} {\enquote {\bibinfo {title} {On the {E}instein
  {P}odolsky {R}osen paradox},}\ }\href@noop {} {\bibfield  {journal} {\bibinfo
   {journal} {Physics}\ }\textbf {\bibinfo {volume} {1}},\ \bibinfo {pages}
  {195} (\bibinfo {year} {1964})}\BibitemShut {NoStop}%
\bibitem [{\citenamefont {Kochen}\ and\ \citenamefont
  {Specker}(1968)}]{Kochen1967The}%
  \BibitemOpen
  \bibfield  {author} {\bibinfo {author} {\bibfnamefont {S.}~\bibnamefont
  {Kochen}}\ and\ \bibinfo {author} {\bibfnamefont {E.}~\bibnamefont
  {Specker}},\ }\bibfield  {title} {\enquote {\bibinfo {title} {The problem of
  hidden variables in quantum mechanics},}\ }\href {\doibase
  10.1512/iumj.1968.17.17004} {\bibfield  {journal} {\bibinfo  {journal}
  {Indiana Univ. Math. J.}\ }\textbf {\bibinfo {volume} {17}},\ \bibinfo
  {pages} {59} (\bibinfo {year} {1968})}\BibitemShut {NoStop}%
\bibitem [{\citenamefont {Toner}\ and\ \citenamefont
  {Bacon}(2003)}]{Toner2003Communication}%
  \BibitemOpen
  \bibfield  {author} {\bibinfo {author} {\bibfnamefont {B.~F.}\ \bibnamefont
  {Toner}}\ and\ \bibinfo {author} {\bibfnamefont {D.}~\bibnamefont {Bacon}},\
  }\bibfield  {title} {\enquote {\bibinfo {title} {Communication cost of
  simulating {Bell} correlations},}\ }\href {\doibase
  10.1103/PhysRevLett.91.187904} {\bibfield  {journal} {\bibinfo  {journal}
  {Physical Review Letters}\ }\textbf {\bibinfo {volume} {91}},\ \bibinfo
  {pages} {187904} (\bibinfo {year} {2003})}\BibitemShut {NoStop}%
\bibitem [{\citenamefont {Barrett}\ and\ \citenamefont
  {Gisin}(2011)}]{Barrett2011How}%
  \BibitemOpen
  \bibfield  {author} {\bibinfo {author} {\bibfnamefont {J.}~\bibnamefont
  {Barrett}}\ and\ \bibinfo {author} {\bibfnamefont {N.}~\bibnamefont
  {Gisin}},\ }\bibfield  {title} {\enquote {\bibinfo {title} {How much
  measurement independence is needed to demonstrate nonlocality?}}\ }\href
  {\doibase 10.1103/PhysRevLett.106.100406} {\bibfield  {journal} {\bibinfo
  {journal} {Physical Review Letters}\ }\textbf {\bibinfo {volume} {106}},\
  \bibinfo {pages} {100406} (\bibinfo {year} {2011})}\BibitemShut {NoStop}%
\bibitem [{\citenamefont {Hall}(2010{\natexlab{a}})}]{Hall2010Complementary}%
  \BibitemOpen
  \bibfield  {author} {\bibinfo {author} {\bibfnamefont {M.~J.~W.}\
  \bibnamefont {Hall}},\ }\bibfield  {title} {\enquote {\bibinfo {title}
  {Complementary contributions of indeterminism and signaling to quantum
  correlations},}\ }\href {\doibase 10.1103/PhysRevA.82.062117} {\bibfield
  {journal} {\bibinfo  {journal} {Physical Review A}\ }\textbf {\bibinfo
  {volume} {82}},\ \bibinfo {pages} {062117} (\bibinfo {year}
  {2010}{\natexlab{a}})}\BibitemShut {NoStop}%
\bibitem [{\citenamefont {Hall}(2010{\natexlab{b}})}]{Hall2010Local}%
  \BibitemOpen
  \bibfield  {author} {\bibinfo {author} {\bibfnamefont {M.~J.~W.}\
  \bibnamefont {Hall}},\ }\bibfield  {title} {\enquote {\bibinfo {title} {Local
  deterministic model of singlet state correlations based on relaxing
  measurement independence},}\ }\href {\doibase 10.1103/PhysRevLett.105.250404}
  {\bibfield  {journal} {\bibinfo  {journal} {Physical Review Letters}\
  }\textbf {\bibinfo {volume} {105}},\ \bibinfo {pages} {250404} (\bibinfo
  {year} {2010}{\natexlab{b}})}\BibitemShut {NoStop}%
\bibitem [{\citenamefont {Hall}(2011)}]{Hall2011Relaxed}%
  \BibitemOpen
  \bibfield  {author} {\bibinfo {author} {\bibfnamefont {M.~J.~W.}\
  \bibnamefont {Hall}},\ }\bibfield  {title} {\enquote {\bibinfo {title}
  {Relaxed {Bell} inequalities and {Kochen}-{Specker} theorems},}\ }\href
  {\doibase 10.1103/PhysRevA.84.022102} {\bibfield  {journal} {\bibinfo
  {journal} {Physical Review A}\ }\textbf {\bibinfo {volume} {84}},\ \bibinfo
  {pages} {022102} (\bibinfo {year} {2011})}\BibitemShut {NoStop}%
\bibitem [{\citenamefont {Koh}\ \emph {et~al.}(2012)\citenamefont {Koh},
  \citenamefont {Hall}, \citenamefont {{Setiawan}}, \citenamefont {Pope},
  \citenamefont {Marletto}, \citenamefont {Kay}, \citenamefont {Scarani},\ and\
  \citenamefont {Ekert}}]{Koh2012Effects}%
  \BibitemOpen
  \bibfield  {author} {\bibinfo {author} {\bibfnamefont {D.~E.}\ \bibnamefont
  {Koh}}, \bibinfo {author} {\bibfnamefont {M.~J.~W.}\ \bibnamefont {Hall}},
  \bibinfo {author} {\bibnamefont {{Setiawan}}}, \bibinfo {author}
  {\bibfnamefont {J.~E.}\ \bibnamefont {Pope}}, \bibinfo {author}
  {\bibfnamefont {C.}~\bibnamefont {Marletto}}, \bibinfo {author}
  {\bibfnamefont {A.}~\bibnamefont {Kay}}, \bibinfo {author} {\bibfnamefont
  {V.}~\bibnamefont {Scarani}}, \ and\ \bibinfo {author} {\bibfnamefont
  {A.}~\bibnamefont {Ekert}},\ }\bibfield  {title} {\enquote {\bibinfo {title}
  {Effects of reduced measurement independence on {Bell}-based randomness
  expansion},}\ }\href {\doibase 10.1103/PhysRevLett.109.160404} {\bibfield
  {journal} {\bibinfo  {journal} {Physical Review Letters}\ }\textbf {\bibinfo
  {volume} {109}},\ \bibinfo {pages} {160404} (\bibinfo {year}
  {2012})}\BibitemShut {NoStop}%
\bibitem [{\citenamefont {Banik}(2013)}]{Banik2013Lack}%
  \BibitemOpen
  \bibfield  {author} {\bibinfo {author} {\bibfnamefont {M.}~\bibnamefont
  {Banik}},\ }\bibfield  {title} {\enquote {\bibinfo {title} {Lack of
  measurement independence can simulate quantum correlations even when
  signaling can not},}\ }\href {\doibase 10.1103/PhysRevA.88.032118} {\bibfield
   {journal} {\bibinfo  {journal} {Physical Review A}\ }\textbf {\bibinfo
  {volume} {88}},\ \bibinfo {pages} {032118} (\bibinfo {year}
  {2013})}\BibitemShut {NoStop}%
\bibitem [{\citenamefont {Thinh}\ \emph {et~al.}(2013)\citenamefont {Thinh},
  \citenamefont {Sheridan},\ and\ \citenamefont {Scarani}}]{Thinh2013Bell}%
  \BibitemOpen
  \bibfield  {author} {\bibinfo {author} {\bibfnamefont {L.~P.}\ \bibnamefont
  {Thinh}}, \bibinfo {author} {\bibfnamefont {L.}~\bibnamefont {Sheridan}}, \
  and\ \bibinfo {author} {\bibfnamefont {V.}~\bibnamefont {Scarani}},\
  }\bibfield  {title} {\enquote {\bibinfo {title} {Bell tests with min-entropy
  sources},}\ }\href {\doibase 10.1103/PhysRevA.87.062121} {\bibfield
  {journal} {\bibinfo  {journal} {Physical Review A}\ }\textbf {\bibinfo
  {volume} {87}},\ \bibinfo {pages} {062121} (\bibinfo {year}
  {2013})}\BibitemShut {NoStop}%
\bibitem [{\citenamefont {P{\"u}tz}\ \emph {et~al.}(2014)\citenamefont
  {P{\"u}tz}, \citenamefont {Rosset}, \citenamefont {Barnea}, \citenamefont
  {Liang},\ and\ \citenamefont {Gisin}}]{Putz2014Arbitrarily}%
  \BibitemOpen
  \bibfield  {author} {\bibinfo {author} {\bibfnamefont {G.}~\bibnamefont
  {P{\"u}tz}}, \bibinfo {author} {\bibfnamefont {D.}~\bibnamefont {Rosset}},
  \bibinfo {author} {\bibfnamefont {T.~J.}\ \bibnamefont {Barnea}}, \bibinfo
  {author} {\bibfnamefont {Y.-C.}\ \bibnamefont {Liang}}, \ and\ \bibinfo
  {author} {\bibfnamefont {N.}~\bibnamefont {Gisin}},\ }\bibfield  {title}
  {\enquote {\bibinfo {title} {Arbitrarily small amount of measurement
  independence is sufficient to manifest quantum nonlocality},}\ }\href
  {\doibase 10.1103/PhysRevLett.113.190402} {\bibfield  {journal} {\bibinfo
  {journal} {Physical Review Letters}\ }\textbf {\bibinfo {volume} {113}},\
  \bibinfo {pages} {190402} (\bibinfo {year} {2014})}\BibitemShut {NoStop}%
\bibitem [{\citenamefont {Maxwell}\ and\ \citenamefont
  {Chitambar}(2014)}]{Maxwell2014Bell}%
  \BibitemOpen
  \bibfield  {author} {\bibinfo {author} {\bibfnamefont {K.}~\bibnamefont
  {Maxwell}}\ and\ \bibinfo {author} {\bibfnamefont {E.}~\bibnamefont
  {Chitambar}},\ }\bibfield  {title} {\enquote {\bibinfo {title} {Bell
  inequalities with communication assistance},}\ }\href {\doibase
  10.1103/PhysRevA.89.042108} {\bibfield  {journal} {\bibinfo  {journal}
  {Physical Review A}\ }\textbf {\bibinfo {volume} {89}},\ \bibinfo {pages}
  {042108} (\bibinfo {year} {2014})}\BibitemShut {NoStop}%
\bibitem [{\citenamefont {Pearl}(2009)}]{Pearl2009Causality}%
  \BibitemOpen
  \bibfield  {author} {\bibinfo {author} {\bibfnamefont {J.}~\bibnamefont
  {Pearl}},\ }\href@noop {} {\emph {\bibinfo {title} {Causality: {Models},
  {Reasoning}, and {Inference}}}},\ \bibinfo {edition} {2nd}\ ed.\ (\bibinfo
  {publisher} {Cambridge University Press},\ \bibinfo {year}
  {2009})\BibitemShut {NoStop}%
\bibitem [{\citenamefont {Spirtes}\ \emph {et~al.}(2001)\citenamefont
  {Spirtes}, \citenamefont {Glymour},\ and\ \citenamefont
  {Scheines}}]{Sprites2001Causation}%
  \BibitemOpen
  \bibfield  {author} {\bibinfo {author} {\bibfnamefont {P.}~\bibnamefont
  {Spirtes}}, \bibinfo {author} {\bibfnamefont {C.}~\bibnamefont {Glymour}}, \
  and\ \bibinfo {author} {\bibfnamefont {R.}~\bibnamefont {Scheines}},\
  }\href@noop {} {\emph {\bibinfo {title} {Causation, {Prediction} and
  {Search}}}}\ (\bibinfo  {publisher} {MIT Press},\ \bibinfo {year}
  {2001})\BibitemShut {NoStop}%
\bibitem [{\citenamefont {Hensen}\ \emph {et~al.}(2015)\citenamefont {Hensen},
  \citenamefont {Bernien}, \citenamefont {Dreau}, \citenamefont {Reiserer},
  \citenamefont {Kalb}, \citenamefont {Blok}, \citenamefont {Ruitenberg},
  \citenamefont {Vermeulen}, \citenamefont {Schouten}, \citenamefont {Abellan},
  \citenamefont {Amaya}, \citenamefont {Pruneri}, \citenamefont {Mitchell},
  \citenamefont {Markham}, \citenamefont {Twitchen}, \citenamefont {Elkouss},
  \citenamefont {Wehner}, \citenamefont {Taminiau},\ and\ \citenamefont
  {Hanson}}]{Hensen2015}%
  \BibitemOpen
  \bibfield  {author} {\bibinfo {author} {\bibfnamefont {B.}~\bibnamefont
  {Hensen}}, \bibinfo {author} {\bibfnamefont {H.}~\bibnamefont {Bernien}},
  \bibinfo {author} {\bibfnamefont {A.~E.}\ \bibnamefont {Dreau}}, \bibinfo
  {author} {\bibfnamefont {A.}~\bibnamefont {Reiserer}}, \bibinfo {author}
  {\bibfnamefont {N.}~\bibnamefont {Kalb}}, \bibinfo {author} {\bibfnamefont
  {M.~S.}\ \bibnamefont {Blok}}, \bibinfo {author} {\bibfnamefont
  {J.}~\bibnamefont {Ruitenberg}}, \bibinfo {author} {\bibfnamefont {R.~F.~L.}\
  \bibnamefont {Vermeulen}}, \bibinfo {author} {\bibfnamefont {R.~N.}\
  \bibnamefont {Schouten}}, \bibinfo {author} {\bibfnamefont {C.}~\bibnamefont
  {Abellan}}, \bibinfo {author} {\bibfnamefont {W.}~\bibnamefont {Amaya}},
  \bibinfo {author} {\bibfnamefont {V.}~\bibnamefont {Pruneri}}, \bibinfo
  {author} {\bibfnamefont {M.~W.}\ \bibnamefont {Mitchell}}, \bibinfo {author}
  {\bibfnamefont {M.}~\bibnamefont {Markham}}, \bibinfo {author} {\bibfnamefont
  {D.~J.}\ \bibnamefont {Twitchen}}, \bibinfo {author} {\bibfnamefont
  {D.}~\bibnamefont {Elkouss}}, \bibinfo {author} {\bibfnamefont
  {S.}~\bibnamefont {Wehner}}, \bibinfo {author} {\bibfnamefont {T.~H.}\
  \bibnamefont {Taminiau}}, \ and\ \bibinfo {author} {\bibfnamefont
  {R.}~\bibnamefont {Hanson}},\ }\bibfield  {title} {\enquote {\bibinfo {title}
  {Loophole-free bell inequality violation using electron spins separated by
  1.3 kilometres},}\ }\href {http://dx.doi.org/10.1038/nature15759} {\bibfield
  {journal} {\bibinfo  {journal} {Nature}\ }\textbf {\bibinfo {volume} {526}},\
  \bibinfo {pages} {682} (\bibinfo {year} {2015})},\ \bibinfo {note}
  {letter}\BibitemShut {NoStop}%
\bibitem [{\citenamefont {Giustina}\ \emph {et~al.}(2015)\citenamefont
  {Giustina}, \citenamefont {Versteegh}, \citenamefont {Wengerowsky},
  \citenamefont {Handsteiner}, \citenamefont {Hochrainer}, \citenamefont
  {Phelan}, \citenamefont {Steinlechner}, \citenamefont {Kofler}, \citenamefont
  {Larsson}, \citenamefont {Abell\'an}, \citenamefont {Amaya}, \citenamefont
  {Pruneri}, \citenamefont {Mitchell}, \citenamefont {Beyer}, \citenamefont
  {Gerrits}, \citenamefont {Lita}, \citenamefont {Shalm}, \citenamefont {Nam},
  \citenamefont {Scheidl}, \citenamefont {Ursin}, \citenamefont {Wittmann},\
  and\ \citenamefont {Zeilinger}}]{PhysRevLett.115.250401}%
  \BibitemOpen
  \bibfield  {author} {\bibinfo {author} {\bibfnamefont {M.}~\bibnamefont
  {Giustina}}, \bibinfo {author} {\bibfnamefont {M.~A.~M.}\ \bibnamefont
  {Versteegh}}, \bibinfo {author} {\bibfnamefont {S.}~\bibnamefont
  {Wengerowsky}}, \bibinfo {author} {\bibfnamefont {J.}~\bibnamefont
  {Handsteiner}}, \bibinfo {author} {\bibfnamefont {A.}~\bibnamefont
  {Hochrainer}}, \bibinfo {author} {\bibfnamefont {K.}~\bibnamefont {Phelan}},
  \bibinfo {author} {\bibfnamefont {F.}~\bibnamefont {Steinlechner}}, \bibinfo
  {author} {\bibfnamefont {J.}~\bibnamefont {Kofler}}, \bibinfo {author}
  {\bibfnamefont {J.-A.}\ \bibnamefont {Larsson}}, \bibinfo {author}
  {\bibfnamefont {C.}~\bibnamefont {Abell\'an}}, \bibinfo {author}
  {\bibfnamefont {W.}~\bibnamefont {Amaya}}, \bibinfo {author} {\bibfnamefont
  {V.}~\bibnamefont {Pruneri}}, \bibinfo {author} {\bibfnamefont {M.~W.}\
  \bibnamefont {Mitchell}}, \bibinfo {author} {\bibfnamefont {J.}~\bibnamefont
  {Beyer}}, \bibinfo {author} {\bibfnamefont {T.}~\bibnamefont {Gerrits}},
  \bibinfo {author} {\bibfnamefont {A.~E.}\ \bibnamefont {Lita}}, \bibinfo
  {author} {\bibfnamefont {L.~K.}\ \bibnamefont {Shalm}}, \bibinfo {author}
  {\bibfnamefont {S.~W.}\ \bibnamefont {Nam}}, \bibinfo {author} {\bibfnamefont
  {T.}~\bibnamefont {Scheidl}}, \bibinfo {author} {\bibfnamefont
  {R.}~\bibnamefont {Ursin}}, \bibinfo {author} {\bibfnamefont
  {B.}~\bibnamefont {Wittmann}}, \ and\ \bibinfo {author} {\bibfnamefont
  {A.}~\bibnamefont {Zeilinger}},\ }\bibfield  {title} {\enquote {\bibinfo
  {title} {Significant-loophole-free test of bell's theorem with entangled
  photons},}\ }\href {\doibase 10.1103/PhysRevLett.115.250401} {\bibfield
  {journal} {\bibinfo  {journal} {Phys. Rev. Lett.}\ }\textbf {\bibinfo
  {volume} {115}},\ \bibinfo {pages} {250401} (\bibinfo {year}
  {2015})}\BibitemShut {NoStop}%
\bibitem [{\citenamefont {Shalm}\ \emph {et~al.}(2015)\citenamefont {Shalm},
  \citenamefont {Meyer-Scott}, \citenamefont {Christensen}, \citenamefont
  {Bierhorst}, \citenamefont {Wayne}, \citenamefont {Stevens}, \citenamefont
  {Gerrits}, \citenamefont {Glancy}, \citenamefont {Hamel}, \citenamefont
  {Allman}, \citenamefont {Coakley}, \citenamefont {Dyer}, \citenamefont
  {Hodge}, \citenamefont {Lita}, \citenamefont {Verma}, \citenamefont
  {Lambrocco}, \citenamefont {Tortorici}, \citenamefont {Migdall},
  \citenamefont {Zhang}, \citenamefont {Kumor}, \citenamefont {Farr},
  \citenamefont {Marsili}, \citenamefont {Shaw}, \citenamefont {Stern},
  \citenamefont {Abell\'an}, \citenamefont {Amaya}, \citenamefont {Pruneri},
  \citenamefont {Jennewein}, \citenamefont {Mitchell}, \citenamefont {Kwiat},
  \citenamefont {Bienfang}, \citenamefont {Mirin}, \citenamefont {Knill},\ and\
  \citenamefont {Nam}}]{PhysRevLett.115.250402}%
  \BibitemOpen
  \bibfield  {author} {\bibinfo {author} {\bibfnamefont {L.~K.}\ \bibnamefont
  {Shalm}}, \bibinfo {author} {\bibfnamefont {E.}~\bibnamefont {Meyer-Scott}},
  \bibinfo {author} {\bibfnamefont {B.~G.}\ \bibnamefont {Christensen}},
  \bibinfo {author} {\bibfnamefont {P.}~\bibnamefont {Bierhorst}}, \bibinfo
  {author} {\bibfnamefont {M.~A.}\ \bibnamefont {Wayne}}, \bibinfo {author}
  {\bibfnamefont {M.~J.}\ \bibnamefont {Stevens}}, \bibinfo {author}
  {\bibfnamefont {T.}~\bibnamefont {Gerrits}}, \bibinfo {author} {\bibfnamefont
  {S.}~\bibnamefont {Glancy}}, \bibinfo {author} {\bibfnamefont {D.~R.}\
  \bibnamefont {Hamel}}, \bibinfo {author} {\bibfnamefont {M.~S.}\ \bibnamefont
  {Allman}}, \bibinfo {author} {\bibfnamefont {K.~J.}\ \bibnamefont {Coakley}},
  \bibinfo {author} {\bibfnamefont {S.~D.}\ \bibnamefont {Dyer}}, \bibinfo
  {author} {\bibfnamefont {C.}~\bibnamefont {Hodge}}, \bibinfo {author}
  {\bibfnamefont {A.~E.}\ \bibnamefont {Lita}}, \bibinfo {author}
  {\bibfnamefont {V.~B.}\ \bibnamefont {Verma}}, \bibinfo {author}
  {\bibfnamefont {C.}~\bibnamefont {Lambrocco}}, \bibinfo {author}
  {\bibfnamefont {E.}~\bibnamefont {Tortorici}}, \bibinfo {author}
  {\bibfnamefont {A.~L.}\ \bibnamefont {Migdall}}, \bibinfo {author}
  {\bibfnamefont {Y.}~\bibnamefont {Zhang}}, \bibinfo {author} {\bibfnamefont
  {D.~R.}\ \bibnamefont {Kumor}}, \bibinfo {author} {\bibfnamefont {W.~H.}\
  \bibnamefont {Farr}}, \bibinfo {author} {\bibfnamefont {F.}~\bibnamefont
  {Marsili}}, \bibinfo {author} {\bibfnamefont {M.~D.}\ \bibnamefont {Shaw}},
  \bibinfo {author} {\bibfnamefont {J.~A.}\ \bibnamefont {Stern}}, \bibinfo
  {author} {\bibfnamefont {C.}~\bibnamefont {Abell\'an}}, \bibinfo {author}
  {\bibfnamefont {W.}~\bibnamefont {Amaya}}, \bibinfo {author} {\bibfnamefont
  {V.}~\bibnamefont {Pruneri}}, \bibinfo {author} {\bibfnamefont
  {T.}~\bibnamefont {Jennewein}}, \bibinfo {author} {\bibfnamefont {M.~W.}\
  \bibnamefont {Mitchell}}, \bibinfo {author} {\bibfnamefont {P.~G.}\
  \bibnamefont {Kwiat}}, \bibinfo {author} {\bibfnamefont {J.~C.}\ \bibnamefont
  {Bienfang}}, \bibinfo {author} {\bibfnamefont {R.~P.}\ \bibnamefont {Mirin}},
  \bibinfo {author} {\bibfnamefont {E.}~\bibnamefont {Knill}}, \ and\ \bibinfo
  {author} {\bibfnamefont {S.~W.}\ \bibnamefont {Nam}},\ }\bibfield  {title}
  {\enquote {\bibinfo {title} {Strong loophole-free test of local realism*},}\
  }\href {\doibase 10.1103/PhysRevLett.115.250402} {\bibfield  {journal}
  {\bibinfo  {journal} {Phys. Rev. Lett.}\ }\textbf {\bibinfo {volume} {115}},\
  \bibinfo {pages} {250402} (\bibinfo {year} {2015})}\BibitemShut {NoStop}%
\bibitem [{\citenamefont {Chaves}\ \emph {et~al.}(2015)\citenamefont {Chaves},
  \citenamefont {Kueng}, \citenamefont {Brask},\ and\ \citenamefont
  {Gross}}]{Chaves2015A}%
  \BibitemOpen
  \bibfield  {author} {\bibinfo {author} {\bibfnamefont {R.}~\bibnamefont
  {Chaves}}, \bibinfo {author} {\bibfnamefont {R.}~\bibnamefont {Kueng}},
  \bibinfo {author} {\bibfnamefont {J.}~\bibnamefont {Brask}}, \ and\ \bibinfo
  {author} {\bibfnamefont {D.}~\bibnamefont {Gross}},\ }\bibfield  {title}
  {\enquote {\bibinfo {title} {Unifying framework for relaxations of the causal
  assumptions in {Bell}'s theorem},}\ }\href {\doibase
  10.1103/PhysRevLett.114.140403} {\bibfield  {journal} {\bibinfo  {journal}
  {Physical Review Letters}\ }\textbf {\bibinfo {volume} {114}},\ \bibinfo
  {pages} {140403} (\bibinfo {year} {2015})}\BibitemShut {NoStop}%
\bibitem [{\citenamefont {Pearle}(1970)}]{Pearle1970Hidden}%
  \BibitemOpen
  \bibfield  {author} {\bibinfo {author} {\bibfnamefont {P.~M.}\ \bibnamefont
  {Pearle}},\ }\bibfield  {title} {\enquote {\bibinfo {title} {Hidden-variable
  example based upon data rejection},}\ }\href {\doibase
  10.1103/PhysRevD.2.1418} {\bibfield  {journal} {\bibinfo  {journal} {Physical
  Review D}\ }\textbf {\bibinfo {volume} {2}},\ \bibinfo {pages} {1418}
  (\bibinfo {year} {1970})}\BibitemShut {NoStop}%
\bibitem [{\citenamefont {Wolf}(2015)}]{wolf_nonlocality_2015}%
  \BibitemOpen
  \bibfield  {author} {\bibinfo {author} {\bibfnamefont {S.}~\bibnamefont
  {Wolf}},\ }\bibfield  {title} {\enquote {\bibinfo {title} {Nonlocality
  without counterfactual reasoning},}\ }\href {\doibase
  10.1103/PhysRevA.92.052102} {\bibfield  {journal} {\bibinfo  {journal}
  {Physical Review A}\ }\textbf {\bibinfo {volume} {92}},\ \bibinfo {pages}
  {052102} (\bibinfo {year} {2015})}\BibitemShut {NoStop}%
\bibitem [{\citenamefont {Ringbauer}\ \emph {et~al.}(2016)\citenamefont
  {Ringbauer}, \citenamefont {Giarmatzi}, \citenamefont {Chaves}, \citenamefont
  {Costa}, \citenamefont {White},\ and\ \citenamefont {Fedrizzi}}]{White2016}%
  \BibitemOpen
  \bibfield  {author} {\bibinfo {author} {\bibfnamefont {M.}~\bibnamefont
  {Ringbauer}}, \bibinfo {author} {\bibfnamefont {C.}~\bibnamefont
  {Giarmatzi}}, \bibinfo {author} {\bibfnamefont {R.}~\bibnamefont {Chaves}},
  \bibinfo {author} {\bibfnamefont {F.}~\bibnamefont {Costa}}, \bibinfo
  {author} {\bibfnamefont {A.~G.}\ \bibnamefont {White}}, \ and\ \bibinfo
  {author} {\bibfnamefont {A.}~\bibnamefont {Fedrizzi}},\ }\bibfield  {title}
  {{\selectlanguage {en}\enquote {\bibinfo {title} {Experimental test of
  nonlocal causality},}\ }}\href {\doibase 10.1126/sciadv.1600162} {\bibfield
  {journal} {\bibinfo  {journal} {Science Advances}\ }\textbf {\bibinfo
  {volume} {2}},\ \bibinfo {pages} {e1600162} (\bibinfo {year}
  {2016})}\BibitemShut {NoStop}%
\bibitem [{\citenamefont {Chapman}\ \emph {et~al.}(2016)\citenamefont
  {Chapman}, \citenamefont {Santandrea}, \citenamefont {Huang}, \citenamefont
  {Corrielli}, \citenamefont {Crespi}, \citenamefont {Yung}, \citenamefont
  {Osellame},\ and\ \citenamefont {Peruzzo}}]{chapman_experimental_2016}%
  \BibitemOpen
  \bibfield  {author} {\bibinfo {author} {\bibfnamefont {R.~J.}\ \bibnamefont
  {Chapman}}, \bibinfo {author} {\bibfnamefont {M.}~\bibnamefont {Santandrea}},
  \bibinfo {author} {\bibfnamefont {Z.}~\bibnamefont {Huang}}, \bibinfo
  {author} {\bibfnamefont {G.}~\bibnamefont {Corrielli}}, \bibinfo {author}
  {\bibfnamefont {A.}~\bibnamefont {Crespi}}, \bibinfo {author} {\bibfnamefont
  {M.-H.}\ \bibnamefont {Yung}}, \bibinfo {author} {\bibfnamefont
  {R.}~\bibnamefont {Osellame}}, \ and\ \bibinfo {author} {\bibfnamefont
  {A.}~\bibnamefont {Peruzzo}},\ }\bibfield  {title} {{\selectlanguage
  {en}\enquote {\bibinfo {title} {Experimental perfect state transfer of an
  entangled photonic qubit},}\ }}\href {\doibase 10.1038/ncomms11339}
  {\bibfield  {journal} {\bibinfo  {journal} {Nature Communications}\ }\textbf
  {\bibinfo {volume} {7}},\ \bibinfo {pages} {11339} (\bibinfo {year}
  {2016})}\BibitemShut {NoStop}%
\bibitem [{\citenamefont {James}\ \emph {et~al.}(2001)\citenamefont {James},
  \citenamefont {Kwiat}, \citenamefont {Munro},\ and\ \citenamefont
  {White}}]{james_measurement_2001}%
  \BibitemOpen
  \bibfield  {author} {\bibinfo {author} {\bibfnamefont {D.~F.~V.}\
  \bibnamefont {James}}, \bibinfo {author} {\bibfnamefont {P.~G.}\ \bibnamefont
  {Kwiat}}, \bibinfo {author} {\bibfnamefont {W.~J.}\ \bibnamefont {Munro}}, \
  and\ \bibinfo {author} {\bibfnamefont {A.~G.}\ \bibnamefont {White}},\
  }\bibfield  {title} {\enquote {\bibinfo {title} {Measurement of qubits},}\
  }\href {\doibase 10.1103/PhysRevA.64.052312} {\bibfield  {journal} {\bibinfo
  {journal} {Phys. Rev. A}\ }\textbf {\bibinfo {volume} {64}},\ \bibinfo
  {pages} {052312} (\bibinfo {year} {2001})}\BibitemShut {NoStop}%
\bibitem [{\citenamefont {Clauser}\ \emph {et~al.}(1969)\citenamefont
  {Clauser}, \citenamefont {Horne}, \citenamefont {Shimony},\ and\
  \citenamefont {Holt}}]{clauser_proposed_1969}%
  \BibitemOpen
  \bibfield  {author} {\bibinfo {author} {\bibfnamefont {J.~F.}\ \bibnamefont
  {Clauser}}, \bibinfo {author} {\bibfnamefont {M.~A.}\ \bibnamefont {Horne}},
  \bibinfo {author} {\bibfnamefont {A.}~\bibnamefont {Shimony}}, \ and\
  \bibinfo {author} {\bibfnamefont {R.~A.}\ \bibnamefont {Holt}},\ }\bibfield
  {title} {\enquote {\bibinfo {title} {Proposed experiment to test local
  hidden-variable theories},}\ }\href {\doibase 10.1103/PhysRevLett.23.880}
  {\bibfield  {journal} {\bibinfo  {journal} {Phys. Rev. Lett.}\ }\textbf
  {\bibinfo {volume} {23}},\ \bibinfo {pages} {880} (\bibinfo {year}
  {1969})}\BibitemShut {NoStop}%
\bibitem [{\citenamefont {Holland}(1975)}]{Holland75a}%
  \BibitemOpen
  \bibfield  {author} {\bibinfo {author} {\bibfnamefont {J.~H.}\ \bibnamefont
  {Holland}},\ }\href@noop {} {\emph {\bibinfo {title} {Adaptation in natural
  and artificial systems}}}\ (\bibinfo  {publisher} {University of Michigan
  Press},\ \bibinfo {address} {Ann Arbor},\ \bibinfo {year} {1975})\BibitemShut
  {NoStop}%
\bibitem [{\citenamefont {Fortin}\ \emph {et~al.}(2012)\citenamefont {Fortin},
  \citenamefont {{De Rainville}}, \citenamefont {Gardner}, \citenamefont
  {Parizeau},\ and\ \citenamefont {Gagn\'e}}]{DEAP_JMLR2012}%
  \BibitemOpen
  \bibfield  {author} {\bibinfo {author} {\bibfnamefont {F.-A.}\ \bibnamefont
  {Fortin}}, \bibinfo {author} {\bibfnamefont {F.-M.}\ \bibnamefont {{De
  Rainville}}}, \bibinfo {author} {\bibfnamefont {M.-A.}\ \bibnamefont
  {Gardner}}, \bibinfo {author} {\bibfnamefont {M.}~\bibnamefont {Parizeau}}, \
  and\ \bibinfo {author} {\bibfnamefont {C.}~\bibnamefont {Gagn\'e}},\
  }\bibfield  {title} {\enquote {\bibinfo {title} {{DEAP}: Evolutionary
  algorithms made easy},}\ }\href@noop {} {\bibfield  {journal} {\bibinfo
  {journal} {Journal of Machine Learning Research}\ }\textbf {\bibinfo {volume}
  {13}},\ \bibinfo {pages} {2171} (\bibinfo {year} {2012})}\BibitemShut
  {NoStop}%
\bibitem [{\citenamefont {Bentley}(1975)}]{bentley_multidimensional_1975}%
  \BibitemOpen
  \bibfield  {author} {\bibinfo {author} {\bibfnamefont {J.~L.}\ \bibnamefont
  {Bentley}},\ }\bibfield  {title} {\enquote {\bibinfo {title}
  {Multidimensional binary search trees used for associative searching},}\
  }\href {\doibase 10.1145/361002.361007} {\bibfield  {journal} {\bibinfo
  {journal} {Commun. ACM}\ }\textbf {\bibinfo {volume} {18}},\ \bibinfo {pages}
  {509} (\bibinfo {year} {1975})}\BibitemShut {NoStop}%
\bibitem [{\citenamefont {Deb}\ \emph {et~al.}(2002)\citenamefont {Deb},
  \citenamefont {Pratap}, \citenamefont {Agarwal},\ and\ \citenamefont
  {Meyarivan}}]{NSGA}%
  \BibitemOpen
  \bibfield  {author} {\bibinfo {author} {\bibfnamefont {K.}~\bibnamefont
  {Deb}}, \bibinfo {author} {\bibfnamefont {A.}~\bibnamefont {Pratap}},
  \bibinfo {author} {\bibfnamefont {S.}~\bibnamefont {Agarwal}}, \ and\
  \bibinfo {author} {\bibfnamefont {T.}~\bibnamefont {Meyarivan}},\ }\bibfield
  {title} {\enquote {\bibinfo {title} {A fast and elitist multiobjective
  genetic algorithm: {NSGA-II}},}\ }\href {\doibase 10.1109/4235.996017}
  {\bibfield  {journal} {\bibinfo  {journal} {Evolutionary Computation, IEEE
  Transactions on}\ }\textbf {\bibinfo {volume} {6}},\ \bibinfo {pages} {182}
  (\bibinfo {year} {2002})}\BibitemShut {NoStop}%
\bibitem [{\citenamefont {Stanley}\ and\ \citenamefont
  {Miikkulainen}(2002)}]{NEAT}%
  \BibitemOpen
  \bibfield  {author} {\bibinfo {author} {\bibfnamefont {K.~O.}\ \bibnamefont
  {Stanley}}\ and\ \bibinfo {author} {\bibfnamefont {R.}~\bibnamefont
  {Miikkulainen}},\ }\bibfield  {title} {\enquote {\bibinfo {title} {Evolving
  neural networks through augmenting topologies},}\ }\href {\doibase
  10.1162/106365602320169811} {\bibfield  {journal} {\bibinfo  {journal} {Evol.
  Comput.}\ }\textbf {\bibinfo {volume} {10}},\ \bibinfo {pages} {99} (\bibinfo
  {year} {2002})}\BibitemShut {NoStop}%
\bibitem [{\citenamefont {Liu}\ \emph {et~al.}(2014)\citenamefont {Liu},
  \citenamefont {Gu},\ and\ \citenamefont {Zhang}}]{liu2014}%
  \BibitemOpen
  \bibfield  {author} {\bibinfo {author} {\bibfnamefont {H.~L.}\ \bibnamefont
  {Liu}}, \bibinfo {author} {\bibfnamefont {F.}~\bibnamefont {Gu}}, \ and\
  \bibinfo {author} {\bibfnamefont {Q.}~\bibnamefont {Zhang}},\ }\bibfield
  {title} {\enquote {\bibinfo {title} {Decomposition of a multiobjective
  optimization problem into a number of simple multiobjective subproblems},}\
  }\href {\doibase 10.1109/TEVC.2013.2281533} {\bibfield  {journal} {\bibinfo
  {journal} {IEEE Transactions on Evolutionary Computation}\ }\textbf {\bibinfo
  {volume} {18}},\ \bibinfo {pages} {450} (\bibinfo {year} {2014})}\BibitemShut
  {NoStop}%
\bibitem [{\citenamefont {de~S\'{a}}\ and\ \citenamefont
  {Pappa}(2013)}]{Guimaraes2013}%
  \BibitemOpen
  \bibfield  {author} {\bibinfo {author} {\bibfnamefont {A.~G.~a.~C.}\
  \bibnamefont {de~S\'{a}}}\ and\ \bibinfo {author} {\bibfnamefont {G.~L.}\
  \bibnamefont {Pappa}},\ }\bibfield  {title} {\enquote {\bibinfo {title}
  {Towards a method for automatically evolving {Bayesian} network
  classifiers},}\ }in\ \href {\doibase 10.1145/2464576.2482729} {\emph
  {\bibinfo {booktitle} {Proceedings of the 15th Annual Conference Companion on
  Genetic and Evolutionary Computation}}},\ \bibinfo {series and number} {GECCO
  '13 Companion}\ (\bibinfo  {publisher} {ACM},\ \bibinfo {address} {New York,
  NY, USA},\ \bibinfo {year} {2013})\ pp.\ \bibinfo {pages}
  {1505--1512}\BibitemShut {NoStop}%
\bibitem [{\citenamefont {Muruz{\'a}bal}\ and\ \citenamefont
  {Cotta}(2007)}]{Muruzabal2007}%
  \BibitemOpen
  \bibfield  {author} {\bibinfo {author} {\bibfnamefont {J.}~\bibnamefont
  {Muruz{\'a}bal}}\ and\ \bibinfo {author} {\bibfnamefont {C.}~\bibnamefont
  {Cotta}},\ }\enquote {\bibinfo {title} {Advances in probabilistic graphical
  models},}\ \ (\bibinfo  {publisher} {Springer Berlin Heidelberg},\ \bibinfo
  {address} {Berlin, Heidelberg},\ \bibinfo {year} {2007})\ Chap.\ \bibinfo
  {chapter} {A Study on the Evolution of Bayesian Network Graph Structures},
  pp.\ \bibinfo {pages} {193--213}\BibitemShut {NoStop}%
\bibitem [{\citenamefont {Larra�aga}\ \emph {et~al.}(2013)\citenamefont
  {Larra�aga}, \citenamefont {Karshenas}, \citenamefont {Bielza},\ and\
  \citenamefont {Santana}}]{Larranaga2013}%
  \BibitemOpen
  \bibfield  {author} {\bibinfo {author} {\bibfnamefont {P.}~\bibnamefont
  {Larra�aga}}, \bibinfo {author} {\bibfnamefont {H.}~\bibnamefont
  {Karshenas}}, \bibinfo {author} {\bibfnamefont {C.}~\bibnamefont {Bielza}}, \
  and\ \bibinfo {author} {\bibfnamefont {R.}~\bibnamefont {Santana}},\
  }\bibfield  {title} {\enquote {\bibinfo {title} {A review on evolutionary
  algorithms in bayesian network learning and inference tasks},}\ }\href
  {\doibase http://dx.doi.org/10.1016/j.ins.2012.12.051} {\bibfield  {journal}
  {\bibinfo  {journal} {Information Sciences}\ }\textbf {\bibinfo {volume}
  {233}},\ \bibinfo {pages} {109 } (\bibinfo {year} {2013})}\BibitemShut
  {NoStop}%
\bibitem [{\citenamefont {Ross}\ and\ \citenamefont
  {Zuviria}(2006)}]{Ross2006}%
  \BibitemOpen
  \bibfield  {author} {\bibinfo {author} {\bibfnamefont {B.~J.}\ \bibnamefont
  {Ross}}\ and\ \bibinfo {author} {\bibfnamefont {E.}~\bibnamefont {Zuviria}},\
  }\bibfield  {title} {\enquote {\bibinfo {title} {Evolving dynamic bayesian
  networks with multi-objective genetic algorithms},}\ }\href {\doibase
  10.1007/s10489-006-0002-6} {\bibfield  {journal} {\bibinfo  {journal}
  {Applied Intelligence}\ }\textbf {\bibinfo {volume} {26}},\ \bibinfo {pages}
  {13} (\bibinfo {year} {2006})}\BibitemShut {NoStop}%
\bibitem [{\citenamefont {Yuan}\ \emph {et~al.}(2014)\citenamefont {Yuan},
  \citenamefont {Xu},\ and\ \citenamefont {Wang}}]{NSGAIII}%
  \BibitemOpen
  \bibfield  {author} {\bibinfo {author} {\bibfnamefont {Y.}~\bibnamefont
  {Yuan}}, \bibinfo {author} {\bibfnamefont {H.}~\bibnamefont {Xu}}, \ and\
  \bibinfo {author} {\bibfnamefont {B.}~\bibnamefont {Wang}},\ }\bibfield
  {title} {\enquote {\bibinfo {title} {An improved nsga-iii procedure for
  evolutionary many-objective optimization},}\ }in\ \href {\doibase
  10.1145/2576768.2598342} {\emph {\bibinfo {booktitle} {Proceedings of the
  2014 Annual Conference on Genetic and Evolutionary Computation}}},\ \bibinfo
  {series and number} {GECCO '14}\ (\bibinfo  {publisher} {ACM},\ \bibinfo
  {address} {New York, NY, USA},\ \bibinfo {year} {2014})\ pp.\ \bibinfo
  {pages} {661--668}\BibitemShut {NoStop}%
\bibitem [{\citenamefont {Laumanns}\ \emph {et~al.}(2002)\citenamefont
  {Laumanns}, \citenamefont {Thiele}, \citenamefont {Deb},\ and\ \citenamefont
  {Zitzler}}]{laumanns2002}%
  \BibitemOpen
  \bibfield  {author} {\bibinfo {author} {\bibfnamefont {M.}~\bibnamefont
  {Laumanns}}, \bibinfo {author} {\bibfnamefont {L.}~\bibnamefont {Thiele}},
  \bibinfo {author} {\bibfnamefont {K.}~\bibnamefont {Deb}}, \ and\ \bibinfo
  {author} {\bibfnamefont {E.}~\bibnamefont {Zitzler}},\ }\bibfield  {title}
  {\enquote {\bibinfo {title} {Combining convergence and diversity in
  evolutionary multiobjective optimization},}\ }\href@noop {} {\bibfield
  {journal} {\bibinfo  {journal} {Evolutionary computation}\ }\textbf {\bibinfo
  {volume} {10}},\ \bibinfo {pages} {263} (\bibinfo {year} {2002})}\BibitemShut
  {NoStop}%
\bibitem [{\citenamefont {Werner}\ and\ \citenamefont
  {Wolf}(2001)}]{Werner2001All}%
  \BibitemOpen
  \bibfield  {author} {\bibinfo {author} {\bibfnamefont {R.~F.}\ \bibnamefont
  {Werner}}\ and\ \bibinfo {author} {\bibfnamefont {M.~M.}\ \bibnamefont
  {Wolf}},\ }\bibfield  {title} {\enquote {\bibinfo {title} {All-multipartite
  {Bell}-correlation inequalities for two dichotomic observables per site},}\
  }\href {\doibase 10.1103/PhysRevA.64.032112} {\bibfield  {journal} {\bibinfo
  {journal} {Physical Review A}\ }\textbf {\bibinfo {volume} {64}},\ \bibinfo
  {pages} {032112} (\bibinfo {year} {2001})}\BibitemShut {NoStop}%
\bibitem [{\citenamefont {{\.Z}ukowski}\ \emph {et~al.}(2002)\citenamefont
  {{\.Z}ukowski}, \citenamefont {Brukner}, \citenamefont {Laskowski},\ and\
  \citenamefont {Wie{\'s}niak}}]{Zukowski2002Do}%
  \BibitemOpen
  \bibfield  {author} {\bibinfo {author} {\bibfnamefont {M.}~\bibnamefont
  {{\.Z}ukowski}}, \bibinfo {author} {\bibfnamefont {{\v{C}}.}~\bibnamefont
  {Brukner}}, \bibinfo {author} {\bibfnamefont {W.}~\bibnamefont {Laskowski}},
  \ and\ \bibinfo {author} {\bibfnamefont {M.}~\bibnamefont {Wie{\'s}niak}},\
  }\bibfield  {title} {\enquote {\bibinfo {title} {Do all pure entangled states
  violate {Bell}'s inequalities for correlation functions?}}\ }\href {\doibase
  10.1103/PhysRevLett.88.210402} {\bibfield  {journal} {\bibinfo  {journal}
  {Physical Review Letters}\ }\textbf {\bibinfo {volume} {88}},\ \bibinfo
  {pages} {210402} (\bibinfo {year} {2002})}\BibitemShut {NoStop}%
\bibitem [{\citenamefont {Collins}\ \emph {et~al.}(2002)\citenamefont
  {Collins}, \citenamefont {Gisin}, \citenamefont {Linden}, \citenamefont
  {Massar},\ and\ \citenamefont {Popescu}}]{Collins2002Bell}%
  \BibitemOpen
  \bibfield  {author} {\bibinfo {author} {\bibfnamefont {D.}~\bibnamefont
  {Collins}}, \bibinfo {author} {\bibfnamefont {N.}~\bibnamefont {Gisin}},
  \bibinfo {author} {\bibfnamefont {N.}~\bibnamefont {Linden}}, \bibinfo
  {author} {\bibfnamefont {S.}~\bibnamefont {Massar}}, \ and\ \bibinfo {author}
  {\bibfnamefont {S.}~\bibnamefont {Popescu}},\ }\bibfield  {title} {\enquote
  {\bibinfo {title} {Bell inequalities for arbitrarily high-dimensional
  systems},}\ }\href {\doibase 10.1103/PhysRevLett.88.040404} {\bibfield
  {journal} {\bibinfo  {journal} {Physical Review Letters}\ }\textbf {\bibinfo
  {volume} {88}},\ \bibinfo {pages} {040404} (\bibinfo {year}
  {2002})}\BibitemShut {NoStop}%
\bibitem [{\citenamefont {Collins}\ and\ \citenamefont
  {Gisin}(2004)}]{Collins2004A}%
  \BibitemOpen
  \bibfield  {author} {\bibinfo {author} {\bibfnamefont {D.}~\bibnamefont
  {Collins}}\ and\ \bibinfo {author} {\bibfnamefont {N.}~\bibnamefont
  {Gisin}},\ }\bibfield  {title} {\enquote {\bibinfo {title} {A relevant two
  qubit {Bell} inequality inequivalent to the {CHSH} inequality},}\ }\href
  {\doibase 10.1088/0305-4470/37/5/021} {\bibfield  {journal} {\bibinfo
  {journal} {Journal of Physics A: Mathematical and General}\ }\textbf
  {\bibinfo {volume} {37}},\ \bibinfo {pages} {1775} (\bibinfo {year}
  {2004})}\BibitemShut {NoStop}%
\bibitem [{\citenamefont {Branciard}\ \emph {et~al.}(2010)\citenamefont
  {Branciard}, \citenamefont {Gisin},\ and\ \citenamefont
  {Pironio}}]{Branciard2010Characterizing}%
  \BibitemOpen
  \bibfield  {author} {\bibinfo {author} {\bibfnamefont {C.}~\bibnamefont
  {Branciard}}, \bibinfo {author} {\bibfnamefont {N.}~\bibnamefont {Gisin}}, \
  and\ \bibinfo {author} {\bibfnamefont {S.}~\bibnamefont {Pironio}},\
  }\bibfield  {title} {\enquote {\bibinfo {title} {Characterizing the nonlocal
  correlations created via entanglement swapping},}\ }\href {\doibase
  10.1103/PhysRevLett.104.170401} {\bibfield  {journal} {\bibinfo  {journal}
  {Physical Review Letters}\ }\textbf {\bibinfo {volume} {104}},\ \bibinfo
  {pages} {170401} (\bibinfo {year} {2010})}\BibitemShut {NoStop}%
\bibitem [{\citenamefont {Henson}\ \emph {et~al.}(2014)\citenamefont {Henson},
  \citenamefont {Lal},\ and\ \citenamefont {Pusey}}]{Henson2014Theory}%
  \BibitemOpen
  \bibfield  {author} {\bibinfo {author} {\bibfnamefont {J.}~\bibnamefont
  {Henson}}, \bibinfo {author} {\bibfnamefont {R.}~\bibnamefont {Lal}}, \ and\
  \bibinfo {author} {\bibfnamefont {M.~F.}\ \bibnamefont {Pusey}},\ }\bibfield
  {title} {\enquote {\bibinfo {title} {Theory-independent limits on
  correlations from generalized {Bayesian} networks},}\ }\href {\doibase
  10.1088/1367-2630/16/11/113043} {\bibfield  {journal} {\bibinfo  {journal}
  {New Journal of Physics}\ }\textbf {\bibinfo {volume} {16}},\ \bibinfo
  {pages} {113043} (\bibinfo {year} {2014})}\BibitemShut {NoStop}%
\bibitem [{\citenamefont {Chaves}(2015)}]{Chaves2015Polynomial}%
  \BibitemOpen
  \bibfield  {author} {\bibinfo {author} {\bibfnamefont {R.}~\bibnamefont
  {Chaves}},\ }\bibfield  {title} {\enquote {\bibinfo {title} {Polynomial
  {Bell} inequalities},}\ }\href {https://arxiv.org/abs/1506.04325} {\bibfield
  {journal} {\bibinfo  {journal} {arXiv:1506.04325 [quant-ph]}\ } (\bibinfo
  {year} {2015})}\BibitemShut {NoStop}%
\bibitem [{\citenamefont {Lee}\ and\ \citenamefont
  {Spekkens}(2015)}]{Lee2015Causal}%
  \BibitemOpen
  \bibfield  {author} {\bibinfo {author} {\bibfnamefont {C.~M.}\ \bibnamefont
  {Lee}}\ and\ \bibinfo {author} {\bibfnamefont {R.~W.}\ \bibnamefont
  {Spekkens}},\ }\bibfield  {title} {\enquote {\bibinfo {title} {Causal
  inference via algebraic geometry: necessary and sufficient conditions for the
  feasibility of discrete causal models},}\ }\href
  {https://arxiv.org/abs/1506.03880} {\bibfield  {journal} {\bibinfo  {journal}
  {arXiv:1506.03880 [quant-ph, stat]}\ } (\bibinfo {year} {2015})}\BibitemShut
  {NoStop}%
\end{thebibliography}%

\appendix
\onecolumngrid

\section{Derivation of the inequality presented in \autoref{sec:bounds}}
\label{app:bounds_derivation}

For the sake of being self-contained, let us start this section with reviewing some basic facts about discrete probability distributions and introduce some notation.
Throughout this section, we focus on the empirical frequencies $F(a,b,x,y)$ and the probability distribution $\Pr(a,b,x,y|M)$ associated to a fixed model $M$.
Here $a,b$ are the special nodes whose causal relationship is of interest, $y$ will denote the parents that are not grandparents of $b$, and $x$ is any set of additional random variables which might include hidden variables $\lambda$ as well as additional measurement outcomes. 
Therefore, the discussion is completely general and not specific to the models considered, e.g., in the experiment.

If we marginalize these distributions over any variable, say $y$, we produce new distributions
\begin{equation}
F(a,b,x) = \sum_y F(a,b,x,y)
\quad \textrm{and} \quad
    \Pr(a,b,x|M) = \sum_y \Pr (a,b,x,y|M),	\label{eq:marginalization}
\end{equation}
respectively. As outlined in \eqref{eq:marginalization}, we indicate marginalization over any variable, by simply omitting the corresponding variable in the description. 
Having such a notation at hand, the \emph{product rule} (for discrete probability distribution) assures
that as an immediate consequence of the definition of conditional distributions,
\begin{equation}
F(a,b,x,y) = F(a,b,x|y)F(y)
\quad \textrm{and} \quad
 \Pr(a,b,x,y|M) = \Pr(a,b,x| y, M) \Pr(y|M)
\end{equation}
for the variable $y$. Analogous formulas are true for any combination of the variables present in the distributions (i.e. $\left\{a,b,x,y\right\}$ for $F(\cdot)$ and $\left\{ a,b,x,y \right\}$ for $\Pr(\cdot | M)$).

With these rules and notational concepts at hand, the following statement is an immediate consequence of the triangle inequality.

\begin{lemma} \label{lem:contraction}
Let $F(a,b,x,y)$ and $\Pr(a,b,x,y|M)$ be as above. Then
\begin{equation}
\| \Pr(b|M) - F(b) \|_1 \leq \| \Pr(a,b|M) - F(a,b) \|_1 \leq \| \Pr(a,b,x,y|M) - F(a,b,x,y) \|_1 = \TVD(M).
\end{equation}
\end{lemma}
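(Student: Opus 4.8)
The plan is to recognize that the entire chain collapses to a single elementary fact applied twice: \emph{marginalizing a signed vector of differences over one variable can only decrease (or preserve) its $1$-norm}, which is exactly the triangle inequality $|\sum_i c_i| \le \sum_i |c_i|$. The closing equality requires no work at all, since $\|\Pr(a,b,x,y|M)-F(a,b,x,y)\|_1 = \TVD(M)$ is merely the definition \eqref{eq:TVD}. So the content is entirely in the two inequalities.

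For the rightmost inequality I would first use the marginalization convention \eqref{eq:marginalization} to write the coarser difference as a sum of the joint difference over the variables being removed,
\[
\Pr(a,b|M)-F(a,b) = \sum_{x,y}\bigl[\Pr(a,b,x,y|M)-F(a,b,x,y)\bigr].
\]
Taking the $1$-norm means summing the absolute value over the surviving labels $(a,b)$; pushing that absolute value inside the sum over $(x,y)$ by the triangle inequality yields
\[
\|\Pr(a,b|M)-F(a,b)\|_1 \le \sum_{a,b,x,y}\bigl|\Pr(a,b,x,y|M)-F(a,b,x,y)\bigr| = \|\Pr(a,b,x,y|M)-F(a,b,x,y)\|_1 .
\]
The leftmost inequality is then literally the same argument one level down: marginalize the pair difference $\Pr(a,b|M)-F(a,b)$ over the single variable $a$ to obtain $\Pr(b|M)-F(b)$, and again move the absolute value through the sum over $a$.

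I do not expect a genuine obstacle here, which is consistent with the text calling the statement ``an immediate consequence of the triangle inequality.'' The only care needed is bookkeeping: ensuring that the vector under each norm is indexed by exactly the variables not yet summed out, and noting that both $F$ and $\Pr(\cdot\,|M)$ are honest (nonnegative, normalized) distributions so that the marginals in \eqref{eq:marginalization} really are the objects appearing in the norms. One could also phrase the whole lemma abstractly—marginalization is a linear map implemented by a $0/1$ stochastic matrix, and such maps are $\ell_1$-contractions—but the direct triangle-inequality computation above is the most transparent route and makes the equality case ($\TVD(M)$) immediate.
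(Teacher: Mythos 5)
Your proposal is correct and is essentially the paper's own proof: both establish each inequality by writing the coarser marginal difference as a sum of the finer one and pushing the absolute value inside via the triangle inequality, with the final equality being just the definition of $\TVD(M)$. The only cosmetic difference is which of the two inequalities you spell out explicitly (the paper details the leftmost and calls the other ``in complete analogy''), which does not constitute a different route.
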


This Lemma encapsulates two particular instance of the well-known fact that marginalization contracts the total variational distance.
Since the latter is a measure of how well two probability distributions can be distinguished and marginalization corresponds to ignoring certain variables, \autoref{lem:contraction} can be intuitively paraphrased as: ``knowing more doesn't hurt''.

\begin{proof}[Proof of \autoref{lem:contraction}]

Inserting the definitions of marginalization and total variational distance yields
\begin{align}
\| \Pr(b|M) - F(b) \|_1 
= & \sum_{b} \left| \sum_a \left( \Pr(a,b|M) - F(a,b) \right) \right| 
\leq  \sum_{a,b} \left| \Pr(a,b|M) - F(a,b) \right| 
= \left\| \Pr(a,b|M) - F(a,b) \right\|_1
\end{align}
upon employing the triangle inequality. The second inequality can be established in complete analogy.
\end{proof}

We are now ready to establish the main auxiliary result necessary to establish \autoref{thm:bounds}.
It requires the concept of the \emph{harmonic mean} for two variables. For $x_1,x_2 >0$ the harmonic mean is defined as $H(x_1,x_2) = \frac{2x_1 x_2}{x_1 + x_2}$.

\begin{lemma} \label{lem:auxiliar}
Consider two bivariate probability distributions $p(u,v)$ and $q(u,v)$ over finitely many elements labeled by $u$ and $v$, respectively. 
Then, the following inequality is valid for any fixed variable $v$:
\begin{equation}
\| p (u|v) - q (u|v) \|_1 \leq \frac{\sum_u \left|p (u,v) - q(u,v) \right| + |p (v) - q(v)|}{H \left( p(v), q(v) \right) } \label{eq:auxiliar}
\end{equation}
\end{lemma}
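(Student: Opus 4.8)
The plan is to express both conditionals in terms of the joint distributions via $p(u|v)=p(u,v)/p(v)$ and $q(u|v)=q(u,v)/q(v)$, and to control their difference after placing it over the common denominator $p(v)q(v)$. Concretely, I would start from
\begin{equation}
p(u|v)-q(u|v) = \frac{p(u,v)\,q(v) - q(u,v)\,p(v)}{p(v)\,q(v)},
\end{equation}
so that the whole task reduces to bounding the numerator $N(u)\defeq p(u,v)q(v)-q(u,v)p(v)$ in a way that, after summing $|N(u)|/(p(v)q(v))$ over $u$, reproduces the right-hand side of \eqref{eq:auxiliar}.

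The key step---and the reason the harmonic mean $H(p(v),q(v))=2p(v)q(v)/(p(v)+q(v))$ appears---is to decompose $N(u)$ \emph{symmetrically}. There are two equally natural ways to split it, namely $N(u)=p(u,v)\bigl(q(v)-p(v)\bigr)+p(v)\bigl(p(u,v)-q(u,v)\bigr)$ and $N(u)=q(u,v)\bigl(q(v)-p(v)\bigr)+q(v)\bigl(p(u,v)-q(u,v)\bigr)$; averaging the two gives
\begin{equation}
N(u) = \tfrac{1}{2}\bigl(p(u,v)+q(u,v)\bigr)\bigl(q(v)-p(v)\bigr) + \tfrac{1}{2}\bigl(p(v)+q(v)\bigr)\bigl(p(u,v)-q(u,v)\bigr).
\end{equation}
Dividing by $p(v)q(v)$ and noting $\tfrac{1}{2}(p(v)+q(v))/(p(v)q(v)) = 1/H(p(v),q(v))$ already isolates a clean $1/H$ factor on both pieces.

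To finish, I would take absolute values, apply the triangle inequality term by term, and sum over $u$. The normalization identities $\sum_u p(u,v)=p(v)$ and $\sum_u q(u,v)=q(v)$ collapse the first piece: $\sum_u \tfrac{1}{2}(p(u,v)+q(u,v))|q(v)-p(v)|/(p(v)q(v)) = |p(v)-q(v)|/H(p(v),q(v))$, while the second piece contributes $\sum_u |p(u,v)-q(u,v)|/H(p(v),q(v))$. Adding the two yields exactly \eqref{eq:auxiliar}.

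The main obstacle is obtaining the harmonic mean rather than a one-sided marginal: a naive asymmetric split leaves $1/p(v)$ (or $1/q(v)$) in the denominator, which gives a valid but differently shaped bound, and it is precisely the symmetric averaging that produces the factor $p(v)+q(v)$ in the numerator and hence the $1/H$ normalization stated in the lemma. The only other point needing care is that the denominators $p(v),q(v)$ are strictly positive, which is implicit both in conditioning on $v$ and in the definition of $H$.
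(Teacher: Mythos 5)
Your proof is correct and takes essentially the same approach as the paper: your symmetric averaging of the two asymmetric splits produces exactly the decomposition the paper obtains by substituting $p(v) = \mu + \delta$, $q(v) = \mu - \delta$ with $\mu = \tfrac{1}{2}(p(v)+q(v))$ and $\delta = \tfrac{1}{2}(p(v)-q(v))$, namely $N(u) = \mu\bigl(p(u,v)-q(u,v)\bigr) - \delta\bigl(p(u,v)+q(u,v)\bigr)$. From there both arguments are identical---triangle inequality, summation over $u$ via the marginalization identities, and identification of the prefactor $\tfrac{p(v)+q(v)}{2p(v)q(v)}$ as $1/H\bigl(p(v),q(v)\bigr)$.
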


We point out that this estimate is responsible for introducing the on first sight unfavorable  scaling of the bounds~\eqref{eq:bound}. 
However, inequality \eqref{eq:auxiliar} is actually tight, making the aforementioned behavior essentially unavoidable. 
To see this, let $u,v, \in \left\{ 0,1 \right\}$ be binary variables and let $p$ be the uniform probability distribution over the four possible joint instances. 
If one chooses $q$ to be a perfectly correlated bivariate distribution---i.e. $q(0,0) = q(1,1) = 1/2$---it is easy to see that equality is attained in the assertion of \autoref{lem:auxiliar}. 

\begin{proof}[Proof of \autoref{lem:auxiliar}]
Fix an arbitrary label $v$. Inverting the product rule allows us to rewrite the left hand side of \eqref{eq:auxiliar} as
\begin{equation}
\| p (u|v) - q(u|v) \|_1 = \left\| \frac{ p (u,v)}{p(v)} - \frac{q(u,v)}{q(v)} \right\|_1 = \frac{1}{p(v) q(v)} \sum_u \left| q (v) p(u,v) - p(v) q(u,v) \right|. \label{eq:auxiliar_aux1}
\end{equation}
For $p(v)$ and $q(v)$ we now define
\begin{equation*}
\mu := \frac{1}{2} \left( p (v) + q(v) \right) 
\quad \textrm{and} \quad
\delta := \frac{1}{2} \left( p (v) - q(v) \right)
\end{equation*}
which obey $p(v) = \mu + \delta$ as well as $q(v) = \mu - \delta$ by construction. Inserting these decompositions into \eqref{eq:auxiliar_aux1}
reveals
\begin{align*}
p(v) q(v) \| p (u|v) - q(u|v) \|_1
=& \sum_u \left| (\mu - \delta) p (u,v) - (\mu + \delta) q(u,v) \right| \\
=& \sum_u \left| \mu \left( p (u,v) - q(u,v) \right) - \delta \left( p(u,v) + q(u,v) \right) \right| \\
\leq & \mu \sum_x \left| p(u,v) - q(u,v) \right| + | \delta | \sum_u \left( p(u,v) + q(u,v) \right) \\
=& \mu \sum_u \left| p (u,v) - q(u,v) \right| + | \delta | (p(v) + q(v) ) \\
=& \mu \left( \sum_u \left| p(u,v) - q(u,v) \right| + 2 | \delta| \right),
\end{align*}
where we have employed the triangle inequality and the definition of marginalization. 
Replacing $\mu$ and $\delta$ with the original expressions then yields
\begin{align*}
\| p (u|v) - q(u|v) \|_1
\leq & \frac{ p(v) + q(v)}{2p(v)q(v)} \left( \sum_u \left| p (u,v) - q(u,v) \right| + \left| p (v) - q(v) \right| \right).
\end{align*}
The desired statement then follows from this estimate by identifying the pre-factor as $1/H\bigl(p(v),q(v)\bigr)$.
\end{proof}

We can now show that a bound holds that relates the maximum deviation between the causal influence of any fixed model $M$ and the frequencies $F$. 

\begin{lemma} \label{lem:fixedM}
For any fixed model $M \in \mathcal{M}$ and fixed set of empirical frequencies $F$, let $y$ denote the parents that are not grandparents of the random variable $b$. Then the following inequality holds 
\begin{equation}
|C_{a \to b} (F) - C_{a \to b} (M)| \le \frac{4 \; \TVD(M)}{\minz{a,y} H\bigl( \Pr(a,y|M),F(a,y)\bigr)}\,,
\label{eq:fixedMbound}
\end{equation}
where $\minz{}$ denotes the minimization over feasible assignments to the variables $a, y$.
\end{lemma}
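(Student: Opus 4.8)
The plan is to bound the difference of the two causal influences by a single pointwise maximization and then to control that pointwise quantity using the two preceding lemmas. I would begin by recalling that $C_{a\to b}(F)$ and $C_{a\to b}(M)$ are each a maximum of an expression of the form $\|(\cdot)(b|a,y)-(\cdot)(b|a',y)\|_1$ over feasible $a,a',y$. Using the elementary fact that $|\maxz{z} g(z) - \maxz{z} h(z)| \le \maxz{z}|g(z)-h(z)|$ (valid when both maxima run over a common domain), the left-hand side of \eqref{eq:fixedMbound} is at most
\[
\maxz{a,a',y} \Big| \|\Pr(b|a,y,M) - \Pr(b|a',y,M)\|_1 - \|F(b|a,y) - F(b|a',y)\|_1 \Big|.
\]

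Next I would apply the reverse triangle inequality for the $1$-norm, $\big|\|v\|_1-\|w\|_1\big|\le\|v-w\|_1$, followed by the ordinary triangle inequality, to bound the bracketed expression for each fixed $(a,a',y)$ by
\[
\|\Pr(b|a,y,M) - F(b|a,y)\|_1 + \|\Pr(b|a',y,M) - F(b|a',y)\|_1 .
\]
This isolates two terms of exactly the shape handled by \autoref{lem:auxiliar}: in each, $b$ plays the role of the conditioned variable $u$ and $(a,y)$ (resp.\ $(a',y)$) plays the role of the conditioning variable $v$, with $p=\Pr(\cdot|M)$ and $q=F(\cdot)$.

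Applying \autoref{lem:auxiliar} to the first term gives an upper bound with denominator $H(\Pr(a,y|M),F(a,y))$ and numerator $\sum_b|\Pr(b,a,y|M)-F(b,a,y)| + |\Pr(a,y|M)-F(a,y)|$. The key estimate is to bound this numerator by $2\,\TVD(M)$: the partial sum over $b$ is dominated by the full sum over $(a,b,y)$, which equals $\|\Pr(a,b,y|M)-F(a,b,y)\|_1\le\TVD(M)$ by \autoref{lem:contraction} (marginalizing out $x$), and likewise $|\Pr(a,y|M)-F(a,y)|\le\|\Pr(a,y|M)-F(a,y)\|_1\le\TVD(M)$. Bounding each denominator from below by $\minz{a,y}H(\Pr(a,y|M),F(a,y))$ (noting $a'$ ranges over the same values as $a$) makes each conditional-distribution term at most $2\,\TVD(M)/\minz{a,y}H(\Pr(a,y|M),F(a,y))$; summing the two produces the factor of $4$ in \eqref{eq:fixedMbound}, and since the resulting bound is uniform in $(a,a',y)$ it survives the outer maximization.

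The main obstacle I anticipate is bookkeeping around the feasibility/support restrictions encoded by $\maxz{}$ and $\minz{}$: the maxima defining $C_{a\to b}(F)$ and $C_{a\to b}(M)$ are a priori taken over possibly different feasible sets (those where $F(a,y)\neq0$ versus those where $\Pr(a,y|M)\neq0$), so the step $|\maxz{} g-\maxz{} h|\le\maxz{}|g-h|$ requires a common domain. This is precisely where the harmonic-mean denominator pays off: if any $H(\Pr(a,y|M),F(a,y))$ vanishes, the right-hand side is infinite and the inequality is vacuous, so it suffices to restrict attention to assignments on which both marginals are strictly positive, and on that common feasible domain the argument above goes through cleanly.
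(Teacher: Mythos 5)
Your proof is correct and takes essentially the same route as the paper's: both arguments reduce the claim to bounding $2\maxz{a,y}\|\Pr(b|a,y,M)-F(b|a,y)\|_1$, then invoke \autoref{lem:auxiliar} with $v=(a,y)$ and \autoref{lem:contraction} to bound the numerator by $2\,\TVD(M)$ and the denominator by $\minz{a,y}H\bigl(\Pr(a,y|M),F(a,y)\bigr)$. The only difference is packaging of the first step---you use $|\maxz{}g-\maxz{}h|\le\maxz{}|g-h|$ together with the reverse triangle inequality, whereas the paper inserts $\pm F(b|a,y)$ and $\pm F(b|a',y)$ inside the norm, applies the triangle inequality, and handles the second direction of the absolute value ``analogously''---which amounts to the same algebra.
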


\begin{proof}
Choose an arbitrary model $M \in \mathcal{M}$. To ease notation, denote by $y$ the parents that are not grandparents of the variable $b$. In order to derive the upper bound presented in \autoref{eq:bound}, we start with inserting the definition \autoref{eq:causal-influence-defn} of $C_{a \to b}(M)$ and observe
\begin{align}
C_{a \to b} (M)
=& \maxz{a,a',y} \left\| \Pr(b|a,y,M) - \Pr(b|a',y,M) \right\|_1 \nonumber \\
=& \maxz{a,a',y} \left\| \Pr(b|a,y,M) - F(b|a,y) - \Pr(b|a',y,M) + F(b|a',y) + F(b|a,y) - F(b|a',y) \right\|_1 \nonumber \\
\leq & \maxz{a,a',y} \left\| \Pr(b|a,y,M) - F(b|a,y) \right\|_1 + \maxz{a,a',y} \left\| \Pr(b|a',y,M) - F(b|a',y) \right\|_1 + \maxz{a,a',y} \left\| F(b|a,y) - F(b|a',y) \right\|_1 
\nonumber \\
=& 2 \maxz{a,y} \left\| \Pr(b|a,y,M) - F(b|a,y) \right\|_1 + C_{a \to b} (F), 
\label{eq:bounds_aux1}
\end{align}
where we have identified the last term as the empirical average causal effect defined in \autoref{eq:data_causal_effect}. 
As a simple bookkeeping device, let us define $v = (a,y)$ to be the cartesian product of the random variables $a$ and $y$. The first term in \eqref{eq:bounds_aux1} can be bounded by invoking \autoref{lem:auxiliar} and \autoref{lem:contraction}. 
Doing so results in 
\begin{align}
C_{a \to b} (M)-C_{a \to b}(F) 
\leq & 2 \maxz{v} \left\| \Pr(b|v,M) - F(b|v) \right\|_1\\
\leq & 2 \maxz{v} \frac{\sum_b \left| \Pr(v,b|M) - F(v,b) \right| + \left| \Pr(v|M) - F(v) \right| }{H\bigl( \Pr(v|M), F(v)\bigr)}\\
\leq & 4 \maxz{v} \frac{ \| \Pr(v,b|M) - F(v,b) \|_1}{H\bigl( \Pr(v|M), F(v)\bigr)}\\
\leq & 4 \maxz{v} \frac{\TVD(M)}{H\bigl( \Pr(v|M), F(v)\bigr)} 
\end{align}
which is equivalent to the upper bound presented in \autoref{eq:fixedMbound}.
The corresponding lower bound can be derived in a completely analogous fashion by starting off with $C_{a \to b}(F)$ instead of $C_{a \to b}(M)$. 
\end{proof}

This bound is not yet useful because the right hand side still depends on the unknown model. 
We seek an inequality that is independent of the model as long as the model has a fixed and sufficiently small value of $\TVD(M)$ with respect to the empirical frequencies. 
The bound in \autoref{sec:bounds} is a way to avoid this difficulty, and we have now assembled all prerequisites necessary to prove it. 
We restate the main theorem for completeness.

\addtocounter{theorem}{-1}
\begin{theorem}
Let $M$ denote any model and let $\TVD(M)$, $C_{a \to b} (M)$ and $C_{a \to b}(F)$ be as in \autoref{eq:TVD}, \autoref{eq:causal-influence-defn} and \autoref{eq:data_causal_effect}, respectively. 
Denote by $\mathcal{M}_\tau$ the set of models having $\TVD(M) \le \tau$ with respect to the empirical frequencies $F$, and let $f^\star = \min_{a,y} F(a,y)$, where $y$ denotes all parents  of the variable $b$ that are not grandparents of $b$. 
Then for all $M \in \mathcal{M}_\tau$ and $\tau < 2 f^\star$ we have
\begin{equation}
|C_{a \to b} (F) - C_{a \to b} (M)| \le \frac{2 \tau (4 f^\star-\tau )}{f^\star (2 f^\star-\tau )}\,.
\end{equation}
\end{theorem}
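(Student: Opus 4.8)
The plan is to take \autoref{lem:fixedM} as the starting point and convert its model-dependent right-hand side into the uniform bound \eqref{eq:bound}. Since $M\in\mathcal{M}_\tau$ gives $\TVD(M)\le\tau$ immediately, the only real work is to produce a lower bound on the denominator $\minz{a,y} H\bigl(\Pr(a,y|M),F(a,y)\bigr)$ that holds for \emph{every} $M\in\mathcal{M}_\tau$. Writing $v=(a,y)$ for brevity, I would fix an arbitrary feasible $v$ and control $H\bigl(\Pr(v|M),F(v)\bigr)$ from below, using that $H$ is increasing in each of its arguments.

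The first ingredient is $F(v)\ge f^\star$, which is just the definition of $f^\star$. The crucial second ingredient is a lower bound on $\Pr(v|M)$, and this is the step I expect to be the crux. The naive estimate $\Pr(v|M)\ge F(v)-\tau$, obtained from $|\Pr(v|M)-F(v)|\le\|\Pr(v|M)-F(v)\|_1\le\TVD(M)\le\tau$ via the contraction property of \autoref{lem:contraction}, is off by a factor of two and only survives for $\tau<f^\star$. The improvement comes from normalization: both $\Pr(\cdot|M)$ and $F(\cdot)$ are probability distributions over $v$, so $\sum_v\bigl(\Pr(v|M)-F(v)\bigr)=0$; hence the total positive and total negative deviations coincide, each equal to exactly half of $\|\Pr(\cdot|M)-F(\cdot)\|_1\le\tau$. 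In particular the deficit at any single coordinate obeys $F(v)-\Pr(v|M)\le\tau/2$, so $\Pr(v|M)\ge F(v)-\tau/2\ge f^\star-\tau/2>0$, where positivity uses precisely the hypothesis $\tau<2f^\star$.

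With both arguments of $H$ controlled, monotonicity gives $H\bigl(\Pr(v|M),F(v)\bigr)\ge H\bigl(F(v)-\tau/2,F(v)\bigr)=\frac{2F(v)\,(2F(v)-\tau)}{4F(v)-\tau}$. A short computation shows this expression is increasing in $F(v)$ — the numerator of its derivative is a quadratic in $F(v)$ with negative discriminant, hence positive — so minimizing over $F(v)\ge f^\star$ places the extremum at $F(v)=f^\star$ and yields the uniform lower bound
\begin{equation*}
\minz{a,y} H\bigl(\Pr(a,y|M),F(a,y)\bigr)\ge\frac{2f^\star(2f^\star-\tau)}{4f^\star-\tau}.
\end{equation*}

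Substituting $\TVD(M)\le\tau$ and this denominator bound into \autoref{lem:fixedM} then gives
\begin{equation*}
|C_{a\to b}(F)-C_{a\to b}(M)|\le\frac{4\tau(4f^\star-\tau)}{2f^\star(2f^\star-\tau)}=\frac{2\tau(4f^\star-\tau)}{f^\star(2f^\star-\tau)},
\end{equation*}
which is exactly \eqref{eq:bound}; since \autoref{lem:fixedM} already controls the absolute difference, both directions are covered simultaneously. The main obstacle is genuinely the factor-of-two sharpening in the coordinatewise bound on $\Pr(v|M)$: it is what extends the region of validity from $\tau<f^\star$ out to $\tau<2f^\star$ and makes the stated constant come out exactly, whereas the monotonicity argument and the remaining algebra are routine.
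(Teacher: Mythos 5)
Your proposal is correct and follows essentially the same route as the paper's proof: both start from \autoref{lem:fixedM}, use the normalization of the two marginal distributions to sharpen the coordinatewise deficit bound to $\tau/2$ (precisely the step that extends validity to $\tau < 2f^\star$ and produces the stated constant), and then use monotonicity of the harmonic mean to lower bound the denominator by $2f^\star(2f^\star-\tau)/(4f^\star-\tau)$ before the final algebra. The only difference is presentational---the paper relaxes the minimization to the set of all probability distributions within $\TVD$ distance $\tau$ of $F$ and bounds their smallest entry, whereas you argue coordinatewise for each feasible $(a,y)$---but the substance is identical.
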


\begin{proof}[Proof of \autoref{thm:bounds}]
Again for the sake of bookkeeping we introduce a variable $v = (a,y)$. We begin with the inequality from \autoref{lem:fixedM} and note that we can simply maximize the righthand side over all $M \in \mathcal{M}_\tau$ to get a universal bound. 
We have 
\begin{equation}
	\max_{M\in \mathcal{M}_\tau}\frac{4 \; \TVD(M)}{\minz{v} H\bigl( \Pr(v|M),F(v)\bigr)} \le \frac{4 \; \tau}{\min_{M\in\mathcal{M}_\tau}\minz{v} H\bigl( \Pr(v|M),F(v)\bigr)} \,.
\end{equation}
Therefore we must establish a lower bound on the denominator. 
Plugging in the definition of the harmonic mean, a simple calculation confirms that $\partial_x H(x,y) = \frac{2 y^2}{(x+y)^2} \ge 0$, so the denominator is bounded from below as
\begin{align}\label{eq:denom1}
	\min_{M\in\mathcal{M}_\tau}\minz{v} \frac{2 \Pr(v|M) F(v)}{ \Pr(v|M) + F(v)}  \ge \min_{M\in\mathcal{M}_\tau}\minz{v} \frac{2 \Pr(v|M) f^\star}{ \Pr(v|M) + f^\star}\,.
\end{align}
Now we relax slightly to allow \emph{all} possible probability distributions (not necessarily ones coming from a causal model $M$, and denote $\mathcal{P}_\tau$ to be the set of all probability distributions $p$ with $\|p - F\|_1\le\tau$. Minimizing over a potentially larger set $\mathcal{P}_\tau$ may only decrease the function (or keep its minimum unchanged). We find the denominator is now bounded by
\begin{align}
	 \min_{M\in\mathcal{M}_\tau}\minz{v} \frac{2 \Pr(v|M) f^\star}{ \Pr(v|M) + f^\star} & \ge \min_{p\in\mathcal{P}_\tau}\minz{v} \frac{2 p(v) f^\star}{ p(v) + f^\star} \ge \min_{p\in\mathcal{P}_\tau} \frac{2 p^\star f^\star}{ p^\star + f^\star} \,.
\end{align}
Here in the second inequality we have used the same monotonicity argument for the harmonic mean above (since it is a symmetric function) and replaced the minimum over $v$ with $p^\star = \minz{v} p(v)$. 

Now we appeal to the monotonicity result of \autoref{lem:contraction}, so that $p\in\mathcal{P}_\tau$ implies that $\|p(v)-F(v)\|_1 \le \tau$. 
The claim then follows if we can establish the following result,
\begin{align}\label{eq:betterpstar}
	\min_{p\in\mathcal{P}_\tau} p^\star \ge f^\star -\frac{\tau}{2} \,.
\end{align}
A weaker result, that $\min_{p\in\mathcal{P}_\tau} p^\star \ge f^\star -\tau$, is easy to see if we relax the requirement that $p$ is normalized and add the more stringent requirement that $\tau < f^\star$. 
Begin with the choice $p(a) = F(a)$, and then subtract $\tau$ from the smallest component, keeping all other components fixed. 
This achieves the least value of this relaxed problem. 
This is a valid solution since the resulting vector is still nonnegative, owing to the constraint $\tau < f^\star$. 
The slightly tighter result follows from reasserting the constraint that the entries of $p$ must sum to 1, and allows us to weaken the constraint on $\tau$ to $\tau < 2 f^\star$. 
With the normalization condition in place, subtracting any deviation of size $\delta$ from a component of $p$ must be compensated by adding $\delta$ elsewhere in the vector, and this contributes a total of $2 \delta$ to the TVD between these differing vectors. 
The largest such a deviation can be is half of $\tau$, and to minimize our objective function we put this deviation on the smallest component. 
This component remains positive because of the condition $\tau < 2 f^\star$, so this remains a valid probability distribution.

Again by the monotonicity of the harmonic mean, this minimal value can be used to lower bound the denominator. The final inequality is obtained by plugging in the value of \autoref{eq:betterpstar} into the denominator expression and simplifying. 
\end{proof}

We remark that the maximum possible value for $f^\star$ in a Bell experiment where $a$ takes $d$ possible outcomes is $1/d$. 
Because this inequality is monotonically decreasing with $f^\star$, the bound becomes weaker as the number of outcomes increases, and the requirement that $\tau < 2 f^\star$ becomes more demanding.  

\end{document}